\documentclass{amsart}[11pt]
\setlength{\hoffset}{-1in}\hoffset-1in
\setlength{\textwidth}{17cm}
\usepackage{amsmath, amsfonts, amsthm, amssymb,mathtools}
\usepackage{subfigure}
\usepackage{caption} 
\usepackage{stmaryrd}
\usepackage{verbatim}
\usepackage{hyperref}
\usepackage{color}
\usepackage{ulem}
\usepackage{enumerate}
\linespread{1.3}
\topmargin = 0pt
\voffset = -10pt

\numberwithin{equation}{section}
\newtheorem{theorem}{Theorem}[section]

\newtheorem{lemma}[theorem]{Lemma}
\newtheorem{proposition}[theorem]{Proposition}

\theoremstyle{definition}
\newtheorem{definition}[theorem]{Definition}
\newtheorem{remark}[theorem]{Remark}

\newcommand{\ind}{1\hspace{-2.1mm}{1}} 

\newcommand{\RR}{\mathbb{R}}
\newcommand{\wt}{\widetilde}
\newcommand{\PP}{\mathbb{P}}

\newcommand{\NN}{\mathbb{N}}
\newcommand{\PPt}{\widetilde{\mathbb{P}}}
\newcommand{\D}{\mathrm{d}}
\newcommand{\Dd}{\mathcal{D}}
\newcommand{\E}{\mathrm{e}}
\newcommand{\BS}{\mathrm{BS}}

\newcommand{\EE}{\mathbb{E}}

\newcommand{\mm}{\mathrm{m}}
\newcommand{\sgn}{\mathrm{sgn}}

\newcommand{\Nn}{\mathcal{N}}
\newcommand{\Ff}{\mathcal{F}}
\newcommand{\Oo}{\mathcal{O}}
\newcommand{\urm}{\mathrm{u}}

\begin{document}
\title{Asymptotic behaviour of the fractional Heston model}
\date{\today}
\author{Hamza Guennoun}
\address{Soci\'et\'e G\'en\'erale, Global markets Quantitative Research}
\email{guennoun.hamza@sgcib.com}
\author{Antoine Jacquier}
\address{Department of Mathematics, Imperial College London}
\email{a.jacquier@imperial.ac.uk}
\author{Patrick Roome}
\address{JP Morgan}
\email{patrick.roome@gmail.com}
\author{Fangwei Shi}
\address{Department of Mathematics, Imperial College London}
\email{fangwei.shi12@imperial.ac.uk}
\thanks{The authors would like to thank Jim Gatheral for useful discussions and Mathieu Rosenbaum for his encouragement.
AJ acknowledges financial support from the EPSRC First Grant EP/M008436/1.
FS is funded by a mini-DTC scholarship from the Department of Mathematics, Imperial College London.
}
\subjclass[2010]{60F10, 91G99, 91B25}
\keywords{Stochastic volatility, implied volatility, asymptotics, fractional Brownian motion, Heston model}

\begin{abstract}
We consider the fractional Heston model originally proposed
by Comte, Coutin and Renault~\cite{CCR12}.
Inspired by recent ground-breaking work on rough volatility~\cite{ALV07, BayerRough, Fuka, GJR14}
which showed that models with volatility driven by fractional Brownian motion with short memory 
allows for better calibration of the volatility surface
and more robust estimation of time series of historical volatility,
we provide a characterisation of the short- and long-maturity asymptotics of the implied volatility smile.
Our analysis reveals that the short-memory property precisely provides a jump-type behaviour of the smile
for short maturities, thereby fixing the well-known standard inability of classical stochastic volatility models to fit 
the short-end of the volatility smile.
\end{abstract}

\maketitle


\section{Introduction}
Since the Black-Scholes model was introduced forty years ago, practitioners and academics
have been proposing refinements thereof in order to take into account the specific behaviour of market data.
In particular, stochastic volatility models, turning the constant Black-Scholes instantaneous volatility of returns
into a stochastic process, have been studied and used heavily.
Monographs such as \cite{FouqueBook, GatheralBook, GuliBook, PHLBook, LewisBook, RebonatoBook} are great sources of information regarding this large class of models, 
both from a theoretical point of view (existence and uniqueness of these processes, asymptotic behaviour), 
and with practitioner's insights (how these models actually perform, how they should behave compared to market data).
Despite the success of these models, it is now widely understood that calibration of the observed implied 
volatility surface fails for short maturities, the observed smile being steeper than that generated by diffusions
with continuous paths.
To remedy this issue, several authors have suggested the addition of jumps, 
either in the form of an independent L\'evy process~\cite{Bates} or within the more general framework of
affine processes~\cite{JKRM, KR}.
Jumps (in the stock price dynamics) imply an explosive behaviour for the implied volatility for short maturities
(see~\cite{Tankov} for a review of this phenomenon), but are able to capture the observed steepness of 
the implied volatility.
This could be the end of the modelling story; 
however, this approach has also had (and still has) his share of controversy since the jump part of the process
is notoriously difficult to hedge, making its practical implementation a rather delicate (and sometimes philosophical) issue.

From a time series modelling point of view, classical stochastic volatility models, 
driven by a Brownian motion, have been criticised for not taking into account the long memory 
of the observed volatility of returns.
In ARCH and GARCH models, memory (quantified through the autocorrelation function) decays exponentially fast, 
whereas it does not decay at all for integrated versions of these.
In the discrete-time setting, this has led to the introduction of fractionally integrated models,
such as ARFIMA~\cite{Granger} and FIGARCH~\cite{Baillie}.
In continuous time, this long-memory behaviour has been modelled through 
fractional Brownian motion~\cite{CR98, CCR12} with Hurst exponent strictly greater than $1/2$.
Fractional Brownian motion has its pitfalls though, since it is not a semimartingale, 
and yields arbitrage opportunities~\cite{Rogers, Shiryaev}.
This can be avoided using heavy machinery~\cite{Cheridito, Elliott, Guasoni, Hu}, 
but as a by-product introduces non-desirable economic features~\cite{Bjork}.
As documented in~\cite{Vilela}, these issues are however not relevant when the fractional Brownian
motion drives the instantaneous volatility rather than the stock price itself.

These fractional stochastic volatility models, somehow popular in the econometrics and statistics communities, have however received little attention from more classical mathematical finance and stochastic analysis groups.
Gatheral, Jaisson and Rosenbaum~\cite{GJR14} have recently suggested to consider the Hurst exponent, 
not as an indicator of the historical memory of the volatility, but as an additional parameter to be calibrated
on the volatility surface.
Their study reveals that $H\in (0,1/2)$, which seems to indicate short memory of the volatility,
thereby contradicting decades of time series analyses.
By considering a specific fractional volatility model, 
directly inspired by the fractional version of the Heston model~\cite{CCR12, Heston}, 
we provide a theoretical justification of this result.
We show in particular that, when $H\in (0,1/2)$, the implied volatility in this model 
explodes in the jump-sense.
Probabilistically speaking, this means that a rescaled (by time) version of the log stock price
process converges weakly, but not the process itself, 
which is reminiscent of what happens in the case of jump-diffusions.
In the case $H\in (1/2, 1)$, heuristically, long memory does not have time to affect the dynamics
of the process and the implied volatility converges to a strictly positive constant.
For large maturities, the phenomenon is reversed, in the sense that short memory gets somehow 
averaged out and the behaviour of the implied volatility is similar to the standard Brownian-driven diffusion case, 
whereas long memory yields a different asymptotic behaviour.
Finally, we comment that recently another fractional version of the Heston model was proposed and analysed in~\cite{Euch1,Euch2,Euch3}, 
in which the authors defined a different structure for the variance process through fractional integration.
In particular, El Euch, Fukasawa and Rosenbaum~\cite{Euch3} bridge the connection between market microstructure and rough volatility by proposing a microscopic price model and by showing that it converges to a rough Heston setting in the long term.
We also refer interested readers to~\cite{Alos17, BayerRough, BayerRoughSkew, forde16} for more recent developments on 
fractional volatility modelling.

This paper is structured as follows:
in Section~\ref{sec:Model}, we introduce the model and study its main properties. 
We in particular show that the characteristic function of the stock price 
is available in closed form.
In Section~\ref{sec:Asymptotics}, we derive the main, probabilistic and financial, results of the paper, 
namely large deviations principles for rescaled versions of the log stock price process and the
asymptotic behaviour of the implied volatility, both for short and for large maturities.
In Section~\ref{sec:numerics} we provide several numerical examples.
Section~\ref{sec:Proofs} contains the proofs of the main results, 
and we add a short reminder on large deviations in the appendix.

\section{The Model and its properties}\label{sec:Model}
\subsection{The fractional Heston model}
Let $(\Omega,\Ff, (\Ff_t)_{t\geq 0}, \PP)$ be a given filtered probability space supporting two independent standard Brownian motions $B$ and $W$.
We denote by $(S_t)_{t\geq 0}$ the stock price process, and let $X_t:=\log(S_t)$ 
be the unique strong solution to the stochastic differential equation
\begin{equation}\label{eq:fhm}
\begin{array}{rll}
\D X_t &= \displaystyle -\frac{1}{2}V^d_t \D t +\sqrt{V^d_t}\D B_t,
\qquad & X_0 = 0,\\
\D V_t &= \displaystyle \kappa (\theta -V_t)\D t+\xi \sqrt{V_t}\D W_t,
\qquad & V_0 = v_0>0,\\
V^d_t &= \displaystyle \eta + I^d_{0+}V_t,
\end{array}
\end{equation}
where $d\in (-1/2, 1/2)$ and the coefficients satisfy $\kappa, \theta, \xi>0$.
The operator~$I^d_{0+}$ is 
the classical left fractional Riemann-Liouville integral of order~$d$:
\begin{equation*}
I^d_{0+}\phi(t) 
:= 
\left\{
\begin{array}{ll}
\displaystyle \int_{0}^{t}\frac{(t-s)^{d-1}}{\Gamma(d)}\phi(s) \D s,
 & \quad \displaystyle \text{for }d\in \left(0,\frac{1}{2}\right),\\
\displaystyle \frac{\D}{\D t}I_{0+}^{d+1}\phi(t),
 & \quad \displaystyle \text{for }d\in \left(-\frac{1}{2}, 0\right),
\end{array}
\right.
\end{equation*}
valid for any function $\phi\in L^1([0,t])$,
where $\Gamma$ is the standard Gamma function.
For more details on these integrals, we refer the interested reader to the monograph~\cite[Chapter~1, Section~2]{Samko},
and for the application to discretisation schemes, we refer to~\cite[Section~5]{CCR12}.
The couple $(X,V)$ corresponds to the standard Heston stochastic volatility model~\cite{Heston},
which admits a unique strong solution by the Yamada-Watanabe conditions~\cite[Proposition 2.13, page 291]{KS97}).
The additional parameter $\eta\geq 0$, as explained in~\cite{CCR12} allows to loosen the tight connection between the mean and the variance of $V_t$.
The process $V$ can be written, in integral form, as 
$V_t = v_0\E^{-\kappa t} + \theta\left(1-\E^{-\kappa t}\right) + \xi\int_{0}^{t}\E^{-\kappa(t-s)}\sqrt{V_s}\D W_s$,
and therefore
It\^o isometry implies that the covariance structure reads, for any $t,u>0$, 
$$
\langle V_u, V_t\rangle  = \frac{\xi^2\theta}{2\kappa}\E^{-\kappa|t-u|} + \frac{\xi^2}{\kappa}(v_0-\theta)\E^{-\kappa(t\wedge u)} - \frac{\xi^2}{2\kappa}(2v_0-\theta)\E^{-\kappa(t+u)}.
$$
The Feller condition~\cite[Chapter 15]{KT81}, $2\kappa\theta\geq\xi^2$, ensures that the origin is unattainable (otherwise it is regular, hence attainable, and strongly reflecting); 
under this condition, since the Riemann-Liouville operator preserves positivity, 
$V_t^d\geq \eta$ almost surely for all $t\geq 0$.
Now, for any $t\geq 0$, we have
$\EE(V_t^d) = \eta + I^d_{0+}\left(v_0\E^{-\kappa t} + \theta\left(1-\E^{-\kappa t}\right)\right)$,
and, for any $t, h\geq 0,$
$$
\langle V_{t+h}^d, V_{t}^d\rangle = 
\int_{0}^{t+h}\int_{0}^{t}\frac{(t+h-s)^{d-1}(t-u)^{d-1}}{\Gamma(d)^2}\langle V_s, V_u\rangle\D u \D s.
$$
The motivation for such a fractional volatility model comes from the seminal work by 
Comte and Renault~\cite{CR98} on the Ornstein-Uhlenbeck process.
Consider the unique strong solution to the stochastic differential equation
$\D X(t)= -\kappa X(t) \D t +\sigma \D W^d (t)$ starting at zero,
with $d \in (0,1/2)$, $\kappa,\sigma>0$ and $W^d$ a fractional Brownian motion;
its fractional derivatives $X^{-d}$ defined via the identity 
$X_t = I^{1+d}_{0+}X^{-d}(t)$, for all $t\geq 0$, or
\begin{equation*}
X^{-d}(t)=\frac{\D}{\D t}\left[ \int^t _0 \frac{(t-s)^{-d}}{\Gamma(1-d)} X_s \D s   \right]= \int^t _0 \frac{(t-s)^{-d}}{\Gamma(1-d)}  \D X_s
\end{equation*}
satisfies the SDE
$\D X^{-d}(t)= -\kappa X^{-d}(t) \D t +\sigma \D B (t)$,
with $X^{-d}(0)=0$, where $B$ is a standard Brownian motion.
Conversely, if $X(t)$ satisfies the Ornstein-Uhlenbeck SDE
$\D X(t)= -\kappa X(t) \D t +\sigma \D W (t)$,
with $X(0)=0$,
then 
\begin{equation*}
\D I^{1+d}_{0+}X(t)= -\kappa I^{1+d}_{0+}X(t) \D t +\sigma \D W^d (t),
\quad\text{with }I^{1+d}_{0+}X(0) = 0.
\end{equation*}
Finally, note that the fair strike of a continuously monitored variance swap with maturity~$T$ reads
\begin{align}\label{eq:VS}
\frac{1}{T}\EE\left(\int_0^TV_t^d\D t\right)
&= \frac{1}{T}\int_0^T\left\{\eta + I_{0+}^d\left[v_0\E^{-\kappa t} + \theta(1-\E^{-\kappa t})\right]\right\}\D t\nonumber\\
&= \eta + \frac{v_0T^d}{\Gamma(d+2)} + \frac{\kappa(\theta-v_0)\E^{-\kappa T}}{T\Gamma(d+2)}\int_0^T t^{d+1}\E^{\kappa t}\D t.
\end{align}
Note in particular that, when $\eta = d = 0$, 
this representation coincides with the expected variance formula in the standard Heston case, 
provided in~\cite[Chapter~11]{GatheralBook}. 
Furthermore, for small~$T>0$, Equation~\eqref{eq:VS} reads
$$
\frac{1}{T}\EE\left(\int_0^TV_t^d\D t\right)
= \frac{v_0T^d}{\Gamma(d+2)} + \eta + \Oo(T^{d+1}),
$$
so that in the rough Heston case the variance swap rate explodes when $d<0$ with a rate of~$T^d$.

\begin{figure}[h]
\centering
\includegraphics[scale=0.5]{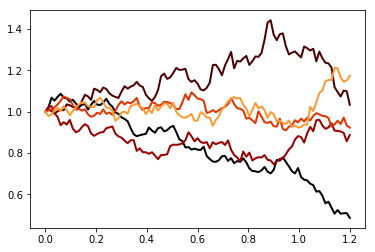}\qquad
\includegraphics[scale=0.5]{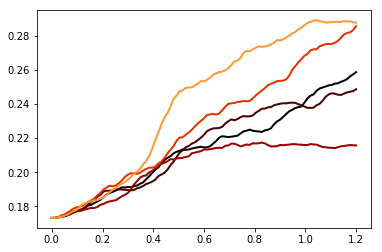}\\
\includegraphics[scale=0.5]{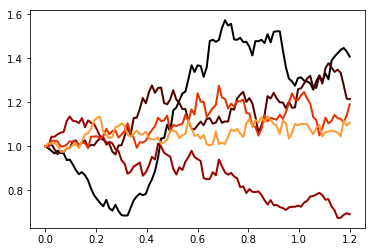}\qquad
\includegraphics[scale=0.5]{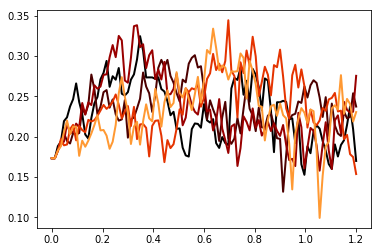}
\caption{We generate five smooth ($d=0.4$, above) and rough ($d=-0.3$, bottom) 
paths for the stock prices (left) and their volatility movements (right). 
Other parameters for the variance process are 
$(\kappa,\theta,\xi,v_0,\eta) = (2.1, 0.05, .2, 0,0.03)$.}
\label{fig:path}
\end{figure}
\subsection{Cumulant generating function}\label{sec:CGF}
Let 
$\mm(u,w,t) \equiv\log\EE(\E^{uX_t + w V_t^d})$, for any $t\geq 0$ and $(u,w)\in \widetilde{\Dd}_t$,
denote the cumulant generating function of the couple~$(X, V^d)$,
where
$\widetilde{\Dd}_t:=\left\{(u,w)\in \RR^2: |\mm(u,w,t)|<\infty\right\}$ is the effective domain of~$\mm$.
The following theorem provides a closed-form expression for it:

\begin{theorem}\label{thm:cgf}
For any $t\geq 0$,
\begin{equation}\label{eq:moment1}
\log\mathbb{E}\left(\E^{uX_t+wV_t^d}\right)
 = w\eta+\frac{u(u-1)\eta t}{2}- B(t)v_0 + A(t),
\qquad\text{for all }(u,w)\in \widetilde{\Dd}_t,
\end{equation}
where the functions $A$ and $B$ satisfy the ordinary differential equations
\begin{equation}\label{eq:ABfracH}
\begin{array}{ll}
 & \displaystyle A'(s)+\kappa\theta B(s)=0,   \\ 
 & \displaystyle B'(s)+\kappa B(s) +\frac{\xi^2}{2} B(s)^2 +\frac{u(u-1)}{2\Gamma(d+1)}s^d+\frac{w}{\Gamma(d)}s^{d-1} =0 ,
\end{array}
\end{equation}
for $0\leq s \leq t$, with the initial conditions $A(0)=B(0)=0$,
and where $\Gamma$ is the usual Gamma function.
\end{theorem}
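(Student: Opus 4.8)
The plan is to derive the joint cumulant generating function by a conditioning argument that reduces the fractional problem to the standard Heston setting. First I would observe that, conditionally on the trajectory of the CIR process $V$ up to time $t$, the log-price $X_t$ is, by the first line of~\eqref{eq:fhm} together with the definition $V^d_s = \eta + I^d_{0+}V_s$, a Gaussian random variable with mean $-\tfrac12\int_0^t V^d_s\,\D s$ and variance $\int_0^t V^d_s\,\D s$. Hence
\begin{equation*}
\EE\left(\E^{uX_t+wV_t^d}\,\middle|\,(V_s)_{0\le s\le t}\right)
 = \exp\left(wV_t^d + \frac{u(u-1)}{2}\int_0^t V^d_s\,\D s\right),
\end{equation*}
and after substituting $V^d_s = \eta + I^d_{0+}V_s$ and using $\int_0^t I^d_{0+}V_s\,\D s$, a deterministic-kernel manipulation (Fubini) rewrites the exponent purely in terms of $V_t$ and of $\int_0^t c(s)V_s\,\D s$ for an explicit kernel $c$. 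Taking expectations then leaves a quantity of the affine form $\EE\bigl(\exp(wV_t + \int_0^t \phi(s)V_s\,\D s + \psi V_t^{?})\bigr)$, which is exactly the kind of functional whose value for the CIR process is known in closed form via the affine/Riccati machinery.

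Second, to evaluate that expectation I would set up the standard Feynman–Kac / affine-transform argument: define, for $0\le s\le t$, the candidate
\begin{equation*}
\Phi(s) := \EE\left(\exp\left(wV_t + \int_s^t \phi(r)V_r\,\D r\right)\,\middle|\,\Ff_s\right)
 = \exp\bigl(A(t)-A(s) - (B(t)-B(s))V_s\bigr)
\end{equation*}
for functions $A,B$ to be determined with terminal data $B(t)=0$ (or, after reparametrising $s\mapsto t-s$, initial data $A(0)=B(0)=0$ as in the statement). Applying Itô's formula to $\exp(\int_s^t \phi(r)V_r\,\D r)\Phi_s$ and using the CIR dynamics $\D V_s = \kappa(\theta-V_s)\,\D s + \xi\sqrt{V_s}\,\D W_s$, the requirement that the drift vanish (so the process is a local martingale) produces, after collecting the terms proportional to $V_s$ and the constant terms separately, precisely the pair of ODEs
\begin{equation*}
A'(s)+\kappa\theta B(s)=0,\qquad
B'(s)+\kappa B(s)+\frac{\xi^2}{2}B(s)^2 + \varphi(s)=0,
\end{equation*}
where $\varphi$ is the time-reversed version of $\phi$ together with the extra $V_t$-coefficient; matching $\varphi(s)$ with the explicit kernel coming from Step~1 gives the terms $\frac{u(u-1)}{2\Gamma(d+1)}s^d$ and $\frac{w}{\Gamma(d)}s^{d-1}$. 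The linear $\eta$-contributions $w\eta + \tfrac12 u(u-1)\eta t$ split off at the very start from the constant part $\eta$ of $V^d$ and never enter the Riccati system, which accounts for the first two terms on the right-hand side of~\eqref{eq:moment1}.

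Third, there is a genuine integrability subtlety that must be handled: the local martingale produced above need only be a true martingale on $\widetilde{\Dd}_t$, so one has to justify that the expectation equals $\Phi(0)=\exp(A(t))$ rather than merely an inequality. I would do this by a standard localisation-plus-uniform-integrability argument (stopping $V$ before it leaves a compact, then letting the stopping level go to infinity and invoking the finiteness built into the definition of $\widetilde{\Dd}_t$, together with nonnegativity of $V$ and of the kernels so that monotone/dominated convergence applies); this is the step I expect to be the main obstacle, since the kernel $s^{d-1}$ is singular at the origin when $d<1$ and one must check that $\int_0^t s^{d-1}V_s\,\D s$ is almost surely finite and that the Riccati solution $B$ does not blow up on $[0,t]$ before declaring the representation valid — both facts are ultimately guaranteed by $d>-1/2$ (so $s^{d-1}\in L^1_{\mathrm{loc}}$ in a suitable sense after the double integration) and by the a priori finiteness of $m(u,w,t)$, but making this rigorous requires care. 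The remaining steps — verifying $A(0)=B(0)=0$ from $\Phi(t)=\exp(wV_t)$ after the time reversal, and checking that the explicit kernels match — are routine computations.
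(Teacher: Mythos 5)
Your proposal is correct in spirit and reaches the same intermediate expression as the paper, but via a genuinely different first step. The paper passes through a Girsanov change of measure: since $B\perp W$, the Radon--Nikodym density $\exp\bigl(u\int_0^t\sqrt{V_s^d}\,\D B_s-\tfrac{u^2}{2}\int_0^t V_s^d\,\D s\bigr)$ leaves the law of $V$ unchanged and yields $\widetilde{\EE}\bigl(\exp(wV_t^d+\tfrac{u(u-1)}{2}\int_0^t V_s^d\,\D s)\bigr)$; the Novikov condition is invoked to justify the change. You instead condition on $(V_s)_{s\le t}$ and use that $X_t$ is conditionally Gaussian (again by independence of $B$ and $W$), which is arguably more elementary — no Novikov / Girsanov needed — and lands on exactly the same quantity. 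For the second half, the paper applies Feynman--Kac and the affine ansatz $\varphi(z,s)=\E^{A(s)-B(s)z}$, while you propose the dual route of postulating an affine local martingale and forcing the drift in Itô's formula to vanish; these are two standard faces of the same Riccati derivation and either works.

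One slip to flag: after substituting $V_s^d=\eta+I^d_{0+}V_s$ and applying Fubini, the exponent decomposes as $w\eta+\tfrac12 u(u-1)\eta t+\int_0^t c(s)V_s\,\D s$ with $c(s)=\tfrac{u(u-1)}{2\Gamma(d+1)}(t-s)^d+\tfrac{w}{\Gamma(d)}(t-s)^{d-1}$; there is \emph{no} surviving $wV_t$ term, because $V_t^d$ depends on the whole path $(V_s)_{s\le t}$ through the Riemann--Liouville integral, not on the endpoint $V_t$. Your candidate $\Phi(s)=\EE\bigl(\exp(wV_t+\int_s^t\phi(r)V_r\,\D r)\mid\Ff_s\bigr)$ therefore carries a spurious terminal coefficient; with it, the terminal condition would read $B(t)=-w$ rather than $B(t)=0$, which is inconsistent with the initial condition $B(0)=0$ in~\eqref{eq:ABfracH} after time reversal. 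Dropping the $wV_t$ term (the $w$ enters only through the kernel $c$) fixes this, and the rest of your matching of $\varphi$ with the ODE source terms then goes through as stated. Your remark on the localisation/true-martingale issue is a fair point; the paper handles it implicitly by invoking Feynman--Kac on the effective domain, and the a.s.\ finiteness of $\int_0^t (t-s)^{d-1}V_s\,\D s$ for $d>-1/2$ is indeed the structural fact making both arguments legitimate.
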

It is interesting to note that the couple $(X, V^d)$ remains affine in the sense of~\cite{DFS}.
As outlined in Remark~\ref{rem:NonMarkov}, correlation would break the Markovian property of the process,
as well as its affine property.
When $\eta=d=w=0$, the Riccati equations~\eqref{eq:ABfracH} are the same as in the standard (uncorrelated) Heston model.
This in particular implies that, when correlation is null, our model~\eqref{eq:fhm} and the Heston model 
have the same marginals.
The analysis of the asymptotic behaviour of the implied volatility below
will only require the knowledge of the function~$m$ evaluated at $w=0$,
and we shall write $\mm(u,w,t) = \mm(u,t)$ with effective domain~$\Dd_t$.
\begin{proof}
The map $(s,u) \rightarrow \ind_{\{0\leq s\leq u\}} \frac{(s-u)^{d-1}}{\Gamma(d)} V _u$ 
is non-negative on~$[0,t]\times [0,t]$ a.s., so that, by Tonelli's theorem,
$$
\int_0^t (V^d_s-\eta) \D s
 = \int_0^t \int_0^s \frac{(s-u)^{d-1}}{\Gamma(d)} V _u \D u \D s
 = \int_0^t \int_0 ^t \ind_{\{0\leq u\leq s\}} \frac{(s-u)^{d-1}}{\Gamma(d)} V _u \D s \D u
 = \int_0^t \frac{(t-u)^d}{\Gamma(d+1)}V_u \D u.
$$
Since moments of the integrated CIR exist, up to explosion~\cite[Part I, Chapter 6.3]{JYCBook}, 
the Novikov condition then justifies the introduction, for any $u\in\Dd_t$,
of the measure $\widetilde{\mathbb{P}}$ via the Radon-Nikodym derivative 
\begin{equation}\label{eq:measchanfrach}
\left.\frac{\D \widetilde{\mathbb{P}}}{\D\mathbb{P}}\right|_{\Ff_t}
 := \exp\left(u\int_{0}^{t}\sqrt{V_s^d}\D B_s-\frac{u^2}{2}\int_{0}^{t}V_s^d\D s\right).
\end{equation}
The moment generating function can then be written as
\begin{align*}
\mathbb{E}\left[\E^{uX_t+wV_t^d}\right]
 & = \E^{w\eta}\widetilde{\mathbb{E}}\left[
\exp\left(\frac{u(u-1)}{2}\int_0^tV_s^d\D s+w\int _0^t\frac{(t-s)^{d-1}}{\Gamma(d)}V_s \D s\right)\right]\\
 & = \exp\left(w\eta+\frac{u(u-1)\eta t}{2}\right)
\widetilde{\mathbb{E}}\left[\exp\left(\int_0^tV_s\left(\frac{u(u-1)}{2\Gamma(d+1)}(t-s)^d+\frac{w(t-s)^{d-1}}{\Gamma(d)}\right)\D s\right)\right].
\end{align*}
Furthermore, under $\widetilde{\mathbb{P}}$, the process~$V$ satisfies
\begin{equation}\label{eq:dynV}
\D V_t= \kappa\left(\theta-V_t\right)\D t+\xi\sqrt{V_t}\D W_t.
\end{equation}
Set now
$\psi(z,s)
:=\widetilde{\mathbb{E}}\left[\left.\exp\left\{
\int_s^t V_r\left(\frac{u(u-1)}{2\Gamma(d+1)}(t-r)^d+\frac{w(t-r)^{d-1}}{\Gamma(d)}\right)\D r\right\}\right| V_s=z\right]$.
We want $\psi(v_0,0)$ and $\psi$ solves the following PDE using the Feynman-Kac formula:
$$
\partial_s \psi(z,s)
+\kappa(\theta-z) \partial_z \psi(z,s)
+\frac{1}{2}z\xi^2  \partial^2_z \psi(z,s)
+\left(\frac{u(u-1)}{2\Gamma(d+1)}(t-s)^d
+\frac{w}{\Gamma(d)}(t-s)^{d-1} \right) z
\psi(z,s)
=0,
$$
for $0\leq s \leq t$, with terminal condition $\psi(z,t)=1$ for $z\geq0$.
The change of variables $\varphi(z,s)\equiv \psi(z,t-s)$ yields
$$
-\partial_s \varphi(z,s)
+\kappa(\theta-z) \partial_z \varphi(z,s)
+\frac{1}{2}z\xi^2  \partial^2_z \varphi(z,s)
+\left(\frac{u(u-1)}{2\Gamma(d+1)}
s^d
+\frac{w}{\Gamma(d)}s^{d-1} \right) z
\varphi(z,s)
=0,
$$
for $0\leq s \leq t$ and with initial condition $\varphi(z,0)=1$ for all $z\geq0$.
Using the ansatz
$
\varphi(z,s)=\E^{A(s;u,t)-B(s;u,t)z},
$
we find that $A(\cdot;u,t),B(\cdot;u,t)$ solve the Ricatti ODEs in~\eqref{eq:ABfracH}.
\end{proof}

\begin{remark}\label{rem:NonMarkov}
For non-zero correlation, the variance dynamics under $\widetilde{\mathbb{P}}$ in~\eqref{eq:dynV} change to
$$
\D V_t= \kappa\left(\theta-V_t\right)\D t+\rho\xi u \sqrt{V_t V_t^d}\D t+\xi\sqrt{V_t}\D W_t,
$$
which is non-Markovian, hence not amenable to Feynman-Kac techniques,
and is left for future research.
\end{remark}

\begin{remark}\label{rem:Explicitkappa0}
In the case where $\kappa=0$, the ODEs~\eqref{eq:ABfracH} can be solved explicitly~\cite{PZ03}, and 
$$
\mm(u,t)=\frac{u(u-1)\eta t}{2}-\frac{2v_0}{\xi^2}\partial_t \log \left( \sqrt{t}\left[C_1 J_{\frac{1}{d+2}}
\left(\frac{\xi}{d+2}\sqrt{\frac{u(u-1)}{\Gamma(d+1)}}t^{d/2+1}\right) 
+C_2 Y_{\frac{1}{d+2}}
\left(\frac{\xi}{d+2}\sqrt{\frac{u(u-1)}{\Gamma(d+1)}}t^{d/2+1}\right)
\right] \right),
$$
where $J$ and $Y$ are the Bessel functions of respectively the first and the second kind, 
and $C_1,C_2$ are constants determined by the boundary conditions.
Using $J_{1/2}(x)=\sqrt{2/(\pi x)}\sin x$ and $Y_{1/2}(x)=-\sqrt{2/(\pi x)}\cos x$ 
we recover the Heston mgf~\cite{JF10} (with $\kappa=0$) when $d=0$ and $\eta=0$, namely
$\mm(u,t)=\frac{v_0}{\xi}\sqrt{u(u-1)}\tan\left(\xi t \sqrt{u(u-1)}/2 \right)$.
\end{remark}
\begin{figure}[h]
\centering
\includegraphics[scale=0.37]{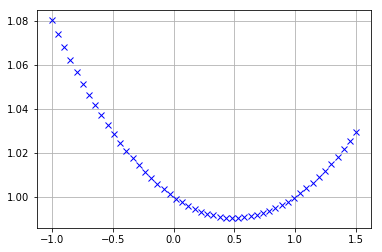}\qquad
\includegraphics[scale=0.37]{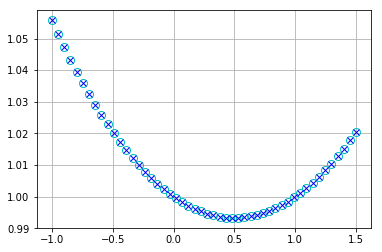}
\caption{
We numerically compute the mgf of~$X_t$ using
the ODE in Theorem~\ref{thm:cgf} with~$t=1$. 
On the left 
$(\kappa, \theta, \xi, v_0, \eta, d) = (2.1, 0.05, 0.2, 0.03, 0.03, -0.3)$.
As a sanity check on the right we compare the numerically solved mgf for 
$d=0$,~$v_0=0.06, \eta=0$ with
the explicit representation of the mgf in a Heston setting (circles),
provided in~\cite[Chapter~2]{GatheralBook}.}
\label{fig:fracHesMGF}
\end{figure}
\section{Large deviations and implied volatility asymptotics}\label{sec:Asymptotics}
\subsection{Large deviations of the log stock price process}
This section gathers the main results of the paper, 
namely a large deviations principle for a rescaled version of the log stock price 
process~$X$ defined in~\eqref{eq:fhm} when time becomes large or small.
We refer the reader to Appendix~\ref{app:LDP} 
for a brief reminder of large deviations and to~\cite{DZ, DS} for thorough treatments.
Before stating (and proving) the main results of the paper, 
introduce the functions $\lambda_\pm^*, \Lambda_{\pm}^*:\RR\to\RR_+$,
$\Lambda_+:[0,1]\to\RR$ and $\Lambda_-:\RR\to\RR$:
\begin{equation}\label{eq:RateFunctions}
\begin{array}{ll}
\Lambda_+(u) \equiv \displaystyle -\frac{\kappa \theta}{\xi(1+d/2)}\sqrt{\frac{u(1-u)}{\Gamma(1+d)}},
& \Lambda_-(u) \equiv \displaystyle \frac{1}{2}u(u-1)\eta,\\
\Lambda_+^*(x)
\displaystyle \equiv \urm^*(x)x+\frac{\kappa \theta}{\xi(1+d/2)}\sqrt{\frac{u^*(x)(1-u^*(x))}{\Gamma(1+d)}},
 & \Lambda_-^*(x)
 :=
\left\{
\begin{array}{ll}
\displaystyle \frac{(x + \eta/2)^2}{2\eta},
\quad & \text{if }x\in [\Lambda_-'(u_-),\Lambda'_-(u_+)],\\
u_+x-\Lambda_-(u_+), \quad  &\text{if }x>\Lambda'_-(u_+),\\
u_-x-\Lambda_-(u_-), \quad & \text{if }x<\Lambda'_-(u_-),
\end{array}
\right.
\\
\lambda_+^*(x)
 \equiv \displaystyle \frac{x^2}{2\eta},
 & \lambda_-^*(x)
 \equiv \displaystyle \frac{\Gamma(2+d)x^2}{2v_0}.
\end{array}
\end{equation}
where the real numbers $u_-,u_+$ are defined in Theorem~\ref{prop:LargeTimeCGF} and
$$
\urm^*(x)
\equiv \frac{1}{2}\left\{1+ \sgn(x) \sqrt{1 - \left[1+\frac{1}{4}\left(\frac{x\xi \sqrt{\Gamma(1+d)}}{\kappa \theta}\right)^2\right]^{-1}} \right\},
$$
with $\sgn(x) := \ind_{\{x\geq 0\}} - \ind_{\{x<0\}}.$
The expressions $\Lambda'_-(u_-)$ and $\Lambda'_-(u_+)$ are shorthand notations
for, respectively $\lim_{u\downarrow u_-}\Lambda'_-(u)$ and $\lim_{u\uparrow u_+}\Lambda'_-(u)$.
Straightforward computations show that the function~$\urm^*$ is smooth on $\RR^*$,
decreasing and concave on $\RR^*_-$ and increasing and concave on $\RR^*_+$, 
with $\urm^*(0) = 1/2$ and
\begin{equation*}
\begin{array}{rll}
\displaystyle \lim_{x\uparrow 0} \urm^{*(n)}(x) & = \displaystyle - \lim_{x\downarrow 0} \urm^{*(n)}(x),
 & \text{if } n \text{ is odd};\\
\displaystyle \lim_{x\uparrow 0} \urm^{*(n)}(x) & = \displaystyle \lim_{x\downarrow 0} \urm^{*(n)}(x) = 0,
 & \text{if } n \text{ is even}.
\end{array}
\end{equation*}
Therefore, $\Lambda^*_+$ is strictly convex and smooth and maps $\RR$ to $\RR_+$.
Straightforward computations also yield that the function $\Lambda_-^*$ is smooth on the real line.
Note that~$\Lambda^*_-$ and~$\Lambda^*_+$ are nothing else than the Fenchel-Legendre transforms
of~$\Lambda_-$ and~$\Lambda_+$.
Likewise, the functions $\lambda_+^*$ and $\lambda_-^*$ are clearly strictly convex on the whole real line.
\begin{theorem}\label{thm:LDP}\ 
\begin{enumerate}[(i)]
\item As $t$ tends to zero, 
\begin{enumerate}[(a)]
\item if $d\in (0,1/2)$, then $(X_t)_{t\geq 0}$ satisfies a LDP with good rate function $\lambda_+^*$ and speed $t^{-1}$;
\item if $d\in (-1/2,0)$, then $(X_t)_{t\geq 0}$ satisfies a LDP with good rate function $\lambda_-^*$ and speed $t^{-(1+d)}$;
\end{enumerate}
\item As $t$ tends to infinity, 
\begin{enumerate}[(a)]
\item if $d\in (0,1/2)$, then $\left(t^{-(1+d/2)}X_t\right)_{t>0}$ satisfies a LDP
with good rate function $\Lambda^*_+$ and speed $t^{-(1+d/2)}$;
\item if $d\in (-1/2,0)$, then $\left(t^{-1}X_t\right)_{t>0}$ satisfies a partial LDP
on $(\Lambda_-'(u_-), \Lambda_-'(u_+))$
with rate function $\Lambda^*_-$ and speed $t^{-1}$,
where $u_-, u_+$ are defined in Proposition~\ref{prop:LargeTimeCGF}.
\end{enumerate}
\end{enumerate}
\end{theorem}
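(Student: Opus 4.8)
The plan is to derive the limiting behaviour of the rescaled cumulant generating function in each of the four regimes and then invoke the Gärtner–Ellis theorem (see Appendix~\ref{app:LDP}) to transfer this to a large deviations principle. Throughout we work with $m(u,t)=m(u,0,t)$, i.e. the function from Theorem~\ref{thm:cgf} evaluated at $w=0$, which is licit because the implied volatility analysis only needs this slice. From~\eqref{eq:moment1} we have $m(u,t)=\tfrac12 u(u-1)\eta t - B(t)v_0 + A(t)$, where $A'=-\kappa\theta B$ and $B$ solves the Riccati equation $B'+\kappa B+\tfrac{\xi^2}{2}B^2+\tfrac{u(u-1)}{2\Gamma(d+1)}s^d=0$ with $B(0)=0$. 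The entire argument therefore reduces to controlling the solution of this scalar Riccati ODE with a power-type forcing term, which is the step I expect to be the main obstacle: unlike the classical Heston case the forcing $s^d$ is time-inhomogeneous, so one cannot simply read off eigenvalues of a constant-coefficient system, and the behaviour as $s\downarrow 0$ (for the small-time results) versus $s\to\infty$ (for the large-time results) is governed by genuinely different balances of terms in the equation.

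For the small-time statements (i), I would rescale time by setting $s=t\sigma$ and track $B(t\sigma)$ as $t\downarrow 0$. When $d\in(0,1/2)$ the forcing term $\tfrac{u(u-1)}{2\Gamma(d+1)}(t\sigma)^d$ is $o(1)$ relative to the linear and quadratic terms once we also rescale $u$ appropriately; the dominant contribution to $m$ comes instead from the explicit drift term $\tfrac12 u(u-1)\eta t$, so with speed $t^{-1}$ one finds $t\,m(u/t,t)\to \tfrac12 u^2\eta =: \Lambda(u)$ whose Fenchel–Legendre transform is exactly $\lambda_+^*(x)=x^2/(2\eta)$. When $d\in(-1/2,0)$ the $\eta t$ term is now negligible at the relevant speed $t^{-(1+d)}$, and the Riccati forcing dominates: rescaling $u=u/t^{(1+d)}$ and analysing the leading balance $B'\approx -\tfrac{u(u-1)}{2\Gamma(d+1)}s^d$, one integrates to get $B(t)\sim -\tfrac{u(u-1)}{2\Gamma(d+2)}t^{d+1}$, hence $t^{-(1+d)}m(u t^{(1+d)},t)\to \tfrac{u^2 v_0}{2\Gamma(d+2)}$, whose dual is $\lambda_-^*(x)=\Gamma(2+d)x^2/(2v_0)$. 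In both cases one must check the Gärtner–Ellis hypotheses: that the limit is finite, differentiable (essentially smooth) on the interior of its domain, and steep at the boundary — here the limits are globally finite quadratics on $\RR$, so steepness is automatic and essential smoothness is immediate.

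For the large-time statements (ii) the relevant object is $m(ut^{-\gamma},t)$ with $\gamma=1+d/2$ when $d>0$ and $\gamma=1$ when $d<0$, and now the forcing $s^d$ is examined for large $s$. When $d\in(0,1/2)$ the balance in the Riccati equation that survives is between $B'$, the quadratic $\tfrac{\xi^2}{2}B^2$ and the growing forcing $\tfrac{u(u-1)}{2\Gamma(d+1)}s^d$ — the linear $\kappa B$ term becomes lower order — giving $B(t)\approx -\sqrt{\tfrac{u(1-u)}{\xi^2\Gamma(d+1)}}\,t^{d/2}$ for $u\in[0,1]$, and after rescaling one obtains $\lim t^{-(1+d/2)}m(ut^{-(1+d/2)}\cdot t,\,t)=-\tfrac{\kappa\theta}{\xi}\sqrt{u(1-u)/\Gamma(1+d)}=\Lambda_+(u)$; note $\Lambda_+$ is only finite on $[0,1]$, so one must verify it is steep at $0$ and $1$ (the square-root singularity makes $\Lambda_+'\to\mp\infty$, so it is), and the Legendre transform then yields $\Lambda_+^*$ parametrised through the explicit root $u(x)$. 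When $d\in(-1/2,0)$ the forcing $s^d$ now decays, the Riccati equation asymptotically approaches the classical constant-coefficient Heston one, and $m(u,t)/t\to \Lambda_-(u)=\tfrac12 u(u-1)\eta$ only on the finite interval $(u_-,u_+)$ where the Heston moment explosion does not occur; because $\Lambda_-$ is not steep at $u_\pm$ (its derivative stays finite), full Gärtner–Ellis fails and one can only extract a partial LDP on $(\Lambda_-'(u_-),\Lambda_-'(u_+))$ — this is why the statement is phrased as a partial LDP, and the argument there is the standard restriction of Gärtner–Ellis to the subinterval of exposed points, together with a separate identification of $u_\pm$ as the moment explosion thresholds of the integrated CIR process, which is where the reference to Proposition~\ref{prop:LargeTimeCGF} enters.
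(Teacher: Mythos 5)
Your overall strategy — compute the limiting rescaled cumulant generating function by analysing the Riccati ODE for~$B$, then invoke the G\"artner--Ellis theorem — is precisely what the paper does: the asymptotic analysis of~$B$ is packaged in Propositions~\ref{prop:SmallTimeCGF} and~\ref{prop:LargeTimeCGF}, and the LDPs are read off from these. The small-time cases~(i) and the large-time case~(ii)(b) match the paper's argument (modulo a few sign slips in your rescalings: the relevant limit is $t^{1+d}m(u/t^{1+d},t)$, not $t^{-(1+d)}m(ut^{1+d},t)$; and in~(ii)(a) the statement is about $t^{-(1+d/2)}m(u,t)$ without a rescaling of~$u$).

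However, there is a genuine gap in your treatment of case~(ii)(a). You correctly observe that $\Lambda_+(u)=-\tfrac{\kappa\theta}{\xi}\sqrt{u(1-u)/\Gamma(1+d)}$ is finite only on $\Dd_\infty=[0,1]$, and you correctly verify that $\Lambda_+$ is steep at the endpoints because of the square-root singularity in $\Lambda_+'$. But steepness is not the only hypothesis the G\"artner--Ellis theorem imposes: it also requires $0\in\Dd_\Lambda^{\circ}$ (see Theorem~\ref{thm:GE}), and here $\Dd_\infty^{\circ}=(0,1)$ does \emph{not} contain the origin. Without this hypothesis the exponential tightness needed for the full upper bound fails (the left tail of $X_t$ is not controlled, since $\Lambda_+(u)=+\infty$ for every $u<0$), so the standard theorem does not apply. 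The paper explicitly flags this obstruction and invokes the one-dimensional refinement of O'Brien and Sun~\cite{Brien}, which relaxes the requirement that $0$ lie in the interior of the domain. Your proof as written silently applies the standard theorem and therefore does not actually establish~(ii)(a).

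Two smaller remarks. First, Proposition~\ref{prop:SmallTimeCGF} is stated and proved in the paper only under the assumption $\kappa=0$; the paper conjectures but does not prove the general case, and your ``leading balance'' heuristic is not, as written, a rigorous replacement for this restriction. Second, your informal justification that the Riccati quadratic term is negligible in each regime is exactly the delicate comparison-principle analysis carried out in Sections~\ref{sec:ProofSmallCGF} and~\ref{sec:ProofLargeCGF}; in the small-time regime the paper actually constructs a convergent power-series solution of the Riccati equation rather than a two-term balance, and in the large-time regime it builds explicit sub- and supersolutions $\psi_\pm$, $\phi_\pm$. You should be aware that these estimates are the technical core of the result, not a routine verification.
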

In practice, the partial LDP in case (ii)(b)is enough here since strikes are of the form~$\E^{xt}$,
for large maturity~$t$ and fixed~$x$; furthermore $0\in(\Lambda_-'(u_-), \Lambda_-'(u_+))$, 
so that for fixed small~$x$, and~$t$ large enough, even large strikes can be computed,
see Theorem~\ref{thm:ImpliedVol} (ii)(b) for details.
For the effective domain~$\Dd_t$ of $\mm(\cdot,t)$ defined in Section~\ref{sec:CGF}, let
$\Dd_\infty := \cap_{t>0}\cup_{s\leq t} \Dd_t = \cup_{t>0}\cap_{s\leq t}\Dd_t$,
and $\Dd_0^{(\delta)}$ the effective domain of the pointwise limit $u\mapsto \mm(t^{-\delta}u, t)$
as $t$ tends to zero ($\delta>0$).
The proof of Theorem~\ref{thm:LDP} relies on the study of the limiting behaviour of the cumulant generating function
of (a rescaled version of) the process~$(X_t)_{t\geq 0}$, 
which we state in the following two propositions
(and defer their proofs to Sections~\ref{sec:ProofSmallCGF} and~\ref{sec:ProofLargeCGF}):

\begin{proposition}\label{prop:SmallTimeCGF}
The following hold as~$t$ tends to zero:
\begin{enumerate}[(i)]
\item if $d \in (0,1/2]$, let $\delta = 1$, then $\Dd_0^{(\delta)} = \RR$ and 
$\displaystyle 
\lim_{t\downarrow 0}t \mm\left(\frac{u}{t},t\right) = \eta \frac{u^2}{2}$,
for $u\in \Dd_0^{(\delta)}$;
\item if $d \in [-1/2,0)$, let $\delta = 1+d$, then $\Dd_0^{(\delta)} = \RR$ and 
$\displaystyle 
\lim_{t\downarrow 0}t^{1+d}\mm\left(\frac{u}{t^{1+d}},t\right) = \frac{v_0 u^2}{2\Gamma(2+d)}$,
for $u\in \Dd_0^{(\delta)}$.
\end{enumerate}
\end{proposition}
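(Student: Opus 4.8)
The plan is to exploit the drastic simplification of~\eqref{eq:ABfracH} when $\kappa=0$. There, with $w=0$, the first equation reads $A'=-\kappa\theta B\equiv 0$, so $A\equiv 0$, and $B$ is the (time-horizon independent) solution on $[0,\infty)$ of the scalar Riccati equation
\[
B'(s)+\frac{\xi^2}{2}B(s)^2+\frac{u(u-1)}{2\Gamma(d+1)}s^d=0,\qquad B(0)=0,
\]
so that $m(u,t)=\tfrac12 u(u-1)\eta t-v_0B(t)$ by~\eqref{eq:moment1}. Since $d>-1$, the forcing $s\mapsto s^d$ is integrable at the origin, so $B$ is well defined and continuous up to its explosion time. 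The key device is the self-similar rescaling: substituting $u\mapsto \bar u:=ut^{-\delta}$ and setting $\widetilde B_t(\sigma):=t^{\delta}B(t\sigma)$ for $\sigma\in[0,1]$, the Riccati equation becomes
\[
\widetilde B_t'(\sigma)+\frac{\xi^2}{2}\,t^{1-\delta}\,\widetilde B_t(\sigma)^2+\frac{u^2 t^{1+d-\delta}-u\,t^{1+d}}{2\Gamma(d+1)}\,\sigma^d=0,\qquad \widetilde B_t(0)=0,
\]
while $t^{\delta}m(\bar u,t)=\tfrac12\eta u^2 t^{1-\delta}-\tfrac12\eta u t-v_0\widetilde B_t(1)$. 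For $\delta=1$ (case (i), $d\in(0,1/2]$) the coefficient of $\widetilde B_t^2$ is the fixed constant $\xi^2/2$ while the forcing coefficient vanishes like $t^{d}$; for $\delta=1+d$ (case (ii), $d\in[-1/2,0)$) the coefficient of $\widetilde B_t^2$ vanishes like $t^{-d}$ while the forcing coefficient converges to $\gamma_0:=u^2/(2\Gamma(d+1))>0$. In either case the first term of $t^{\delta}m(\bar u,t)$ tends to $\eta u^2/2$ in case (i) and to $0$ in case (ii) (there $t^{1-\delta}=t^{-d}\to 0$), while the second term always tends to $0$; hence everything reduces to the asymptotics of $\widetilde B_t(1)$.

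Before extracting that limit I would prove that $\Dd_0^{(\delta)}=\RR$, i.e. that $\widetilde B_t$ does not explode on $[0,1]$ for $t$ small (equivalently, that the corresponding moment of $S_t$ is finite). Fixing $u\ne 0$ (the case $u=0$ being trivial), for $t$ small one has $\bar u\notin[0,1]$, so the forcing coefficient above is strictly positive; hence $\widetilde B_t$ is strictly decreasing on $(0,1]$, stays nonpositive, and can only blow up to $-\infty$. Bounding $\sigma^d\le 1$ (case (i)) or integrating $\sigma^d$ against the constant forcing coefficient and using $\int_0^1\sigma^d\D\sigma=1/(d+1)$ (case (ii)), a standard comparison bounds $-\widetilde B_t$ from above by the explicit solution of a Riccati equation whose coefficients (or initial datum) degenerate as $t\downarrow 0$; such solutions are rational or tangent functions whose explosion times tend to $+\infty$, hence exceed $1$ for $t$ small. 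This simultaneously rules out explosion and delivers a uniform bound $\sup_{\sigma\in[0,1]}|\widetilde B_t(\sigma)|\le C$ for all small $t$. This comparison bookkeeping — delicate only near $\sigma=0$ when $d<0$, where one uses integrability of $\sigma^d$ — is where the real work lies, although it is entirely elementary.

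It then remains to pass to the limit. In case (i), the same comparison yields $y(1)\le\widetilde B_t(1)\le 0$ with $y(1)\to 0$, so $\widetilde B_t(1)\to 0$ and $t\,m(u/t,t)\to \eta u^2/2$. In case (ii), I would integrate to get $\widetilde B_t(\sigma)=-\tfrac{\xi^2}{2}t^{-d}\int_0^\sigma\widetilde B_t^2-\gamma_t\int_0^\sigma r^d\D r$ with $\gamma_t\to\gamma_0$, compare with $\widetilde B_0(\sigma):=-\gamma_0\int_0^\sigma r^d\D r=-u^2\sigma^{d+1}/(2\Gamma(d+2))$, and use the uniform bound $C$ together with $t^{-d}\to 0$ to get $\sup_{\sigma\in[0,1]}|\widetilde B_t(\sigma)-\widetilde B_0(\sigma)|\le \tfrac{\xi^2}{2}C^2 t^{-d}+|\gamma_t-\gamma_0|/(d+1)\to 0$; in particular $\widetilde B_t(1)\to -u^2/(2\Gamma(2+d))$, hence $t^{1+d}m(u/t^{1+d},t)\to v_0 u^2/(2\Gamma(2+d))$. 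As a cross-check, both limits can alternatively be obtained from the closed-form Bessel representation of Remark~\ref{rem:Explicitkappa0} by expanding $J_\nu$ and $Y_\nu$ near the origin; the ODE route above is preferable because it sidesteps the explicit determination of the boundary constants $C_1,C_2$.
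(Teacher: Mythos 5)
Your proposal is correct but takes a genuinely different route from the paper. The paper constructs the solution of the (frozen-parameter) Riccati equation~\eqref{eq: odeB} as an explicit power series $B(t)=\sum_{i\ge1}\alpha_i\zeta^i t^{i(d+2)-1}$ with $\zeta=u(1-u)/(2\Gamma(d+2))$, proves the recursion $\alpha_{i}=-\tfrac{\xi^2}{2(i(d+2)-1)}\sum_{k=1}^{i-1}\alpha_k\alpha_{i-k}$ yields a series of radius of convergence at least $1/4$ in the variable $\zeta t^{d+2}$, and then substitutes $u\mapsto u/t^{\delta}$ into the series to isolate the leading term. Your proof instead rescales the ODE itself, writing $\widetilde B_t(\sigma)=t^{\delta}B(t\sigma)$, and identifies the small-$t$ limit as the solution of the degenerate limiting ODE (constant quadratic coefficient with vanishing forcing in case (i); vanishing quadratic coefficient with limiting forcing $\gamma_0\sigma^d$ in case (ii)), using a comparison/bootstrap argument both to rule out explosion — which gives $\Dd_0^{(\delta)}=\RR$ — and to furnish the uniform bound $\sup_{[0,1]}|\widetilde B_t|\le C$ that lets one pass to the limit in the integral form of the equation. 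Both routes are valid. The paper's series gives a finer asymptotic (e.g.\ $tB(u/t,t)\sim -\tfrac{u^2}{2\Gamma(2+d)}t^{d}$ in case (i), not just the limit $0$) and an explicit all-order expansion; your approach is more structural and shorter once the a priori bound is established. The only place your sketch is thin is exactly where you flag it — the uniform bound when $d<0$, where $\sigma^d$ is unbounded near $\sigma=0$. A clean way to close it: with $y:=-\widetilde B_t\ge0$, integrate $y'=a_ty^2+b_t\sigma^d$ to get $y(\sigma)\le a_t y(\sigma)^2+b_t/(d+1)$ on $[0,1]$ (using monotonicity of $y$ and $\sigma^{d+1}\le1$), so that $y(\sigma)$ is trapped below the smaller root $\tfrac{1-\sqrt{1-4a_tb_t/(d+1)}}{2a_t}\le 2b_t/(d+1)$ of the quadratic, which stays bounded as $a_t\to0$; this avoids any delicate local analysis near the origin and simultaneously produces the constant $C$. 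With that filled in, the argument is complete and essentially self-contained.
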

\begin{proposition}\label{prop:LargeTimeCGF}
As $t$ tends to infinity, we have the following behaviours for the cumulant generating function:
\begin{enumerate}[(i)]
\item if $d \in (0,1/2)$, then $\Dd_\infty = [0,1]$ and 
$\displaystyle 
\lim_{t\uparrow\infty} t^{-(1+d/2)} \mm(u,t) = \Lambda_+(u)$,
for $u\in \Dd_\infty$;
\item if $d \in (-1/2,0)$, then there exist $u_-\leq 0$, $u_+\geq 1$ such that 
$\displaystyle 
\lim_{t\uparrow\infty} t^{-1} \mm(u,t) = \Lambda_-(u)$,
for $u\in \Dd_\infty = [u_-, u_+]$.
\end{enumerate}
\end{proposition}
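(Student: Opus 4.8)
The plan is to work from the closed form of Theorem~\ref{thm:cgf}. Setting $w=0$ in~\eqref{eq:moment1} gives, for $u\in\Dd_t$,
\[
m(u,t)=\frac{u(u-1)\eta t}{2}-v_0B(t)+A(t),\qquad A(t)=-\kappa\theta\int_0^tB(s)\,\D s,
\]
where $B=B(\cdot\,;u)$ solves the scalar Riccati equation $B'+\kappa B+\tfrac{\xi^2}{2}B^2+\gamma(u)s^d=0$, $B(0)=0$, with $\gamma(u):=\tfrac{u(u-1)}{2\Gamma(1+d)}$. Thus the statement splits into (a) identifying the set of $u$ for which this solution is global on $[0,\infty)$, which is exactly $\Dd_\infty$, and (b) finding the leading order of $B(t)$ and of $\int_0^tB$ on that set. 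The main device is the Riccati--Sturm substitution $B=\tfrac{2}{\xi^2}\psi'/\psi$, turning the equation into the linear ODE $\psi''+\kappa\psi'+\tfrac{\xi^2}{2}\gamma(u)s^d\psi=0$, $\psi(0)=1$, $\psi'(0)=0$: then $B$ explodes exactly at the first zero of $\psi$, and the change of unknown $\psi=\E^{-\kappa s/2}\chi$ brings this to normal form $\chi''+V_u(s)\chi=0$ with effective potential $V_u(s):=\tfrac{\xi^2}{2}\gamma(u)s^d-\tfrac{\kappa^2}{4}$. For $\kappa=0$ these are the Bessel-type functions of Remark~\ref{rem:Explicitkappa0}, which I would use as a comparison object. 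I also record that $B(\cdot\,;u)\equiv0$ for $u\in\{0,1\}$; $\gamma(u)<0$ on $(0,1)$ and $\gamma(u)>0$ off $[0,1]$; and, for frozen $s$ with $\gamma(u)\le0$, the equation has a unique non-negative stable root $\beta_s(u)=\xi^{-2}\bigl(-\kappa+\sqrt{\kappa^2-2\xi^2\gamma(u)s^d}\bigr)\sim\xi^{-1}\sqrt{-2\gamma(u)}\,s^{d/2}$ as $s\uparrow\infty$.

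For part~(i) assume $d\in(0,1/2)$. If $u\in(0,1)$ then $\E^{\kappa s}\psi'$ is non-decreasing, so $\psi$ is non-decreasing from $\psi(0)=1$ and never vanishes; hence $B$ is global. Moreover the forcing $-\gamma(u)s^d>0$ grows while the stable root $\beta_s(u)$ is slowly varying and the linearised relaxation rate $\kappa+\xi^2\beta_s(u)$ diverges, so an adiabatic sandwiching of $B$ between $(1\mp\varepsilon)\beta_s(u)$ for $s$ large gives $B(s;u)\sim\beta_s(u)\sim\xi^{-1}\sqrt{u(1-u)/\Gamma(1+d)}\,s^{d/2}$; integrating, $A(t)$ is of exact order $t^{1+d/2}$ with an explicit constant, which dominates both $\tfrac12u(u-1)\eta t$ and $v_0B(t)=O(t^{d/2})$ since $1+d/2>1>d/2$, giving $t^{-(1+d/2)}m(u,t)\to\Lambda_+(u)$. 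If $u\notin[0,1]$ then $V_u$ is eventually positive and growing, so by Sturm's comparison theorem (against $\chi''+\omega^2\chi=0$ on a far-out interval) $\chi$, hence $\psi$, must vanish in finite time; $B$ then explodes, $u\notin\Dd_t$ for $t$ large, and together with the endpoints this gives $\Dd_\infty=[0,1]$.

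For part~(ii) assume $d\in(-1/2,0)$, so $-\gamma(u)s^d$ vanishes at infinity. For $u\in(0,1)$, as above $\psi>0$ and $B$ stays positive with $B(s;u)\sim\kappa^{-1}|\gamma(u)|s^d\to0$ (resp.\ $\sim\xi^{-1}\sqrt{2|\gamma(u)|}\,s^{d/2}\to0$ if $\kappa=0$). For $u$ off $[0,1]$ the potential $V_u$ is positive only on a bounded time-interval near $s=0$ and negative for $s$ large (this uses $\kappa>0$: for $\kappa=0$, $V_u\ge0$ everywhere, $\chi$ oscillates indefinitely, $B$ always blows up, and $u_\pm=0,1$); a non-oscillation criterion then shows $\chi$ has no zero once $|\gamma(u)|$ is small, whereas for $|\gamma(u)|$ large $V_u$ is large on a fixed compact interval and forces a zero of $\chi$, hence blow-up. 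By monotonicity in $\gamma$, and since $[0,1]$ always lies in the no-blow-up set, that set is an interval $[u_-,u_+]$ with $u_-\le0\le1\le u_+$ and $\Dd_\infty=[u_-,u_+]$. On $[u_-,u_+]$ one has $B(s;u)\to0$ with $\int_0^tB(s;u)\,\D s=o(t)$, hence $A(t)=o(t)$, $v_0B(t)=o(t)$, and $t^{-1}m(u,t)\to\tfrac12u(u-1)\eta=\Lambda_-(u)$.

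I expect the crux to be part~(b). In regime~(i): making the adiabatic tracking $B(s;u)\sim\beta_s(u)$ quantitative enough to pin down the exact constant in $\Lambda_+$, uniformly on compact subsets of $(0,1)$, with care near $s=0$ (where $B$ starts from $0$ while $\beta_s$ is already positive) and near the endpoints $u\downarrow0$, $u\uparrow1$ (where $\gamma(u)\to0$). In regime~(ii): establishing the blow-up dichotomy and the existence, finiteness and boundary behaviour of $u_\pm$ (at the extreme points $B$ follows the unstable branch of the frozen equilibrium rather than the stable one, which has to be accounted for); here the explicit Bessel representation of Remark~\ref{rem:Explicitkappa0} for $\kappa=0$, extended to $\kappa>0$ by a comparison/monotonicity argument (using that $-\kappa B\ge0$ whenever $B\le0$ only works against explosion), should carry the essential part.
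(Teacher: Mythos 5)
Your core asymptotic mechanism — compare $B$ to the ``frozen'' Riccati roots $\beta_s(u)=\xi^{-2}\bigl(-\kappa+\sqrt{\kappa^2-2\xi^2\gamma(u)s^d}\bigr)$ and deduce $B(t)\sim\beta_0 t^{d/2}$ for $d>0$, $B(t)\to0$ for $d<0$ — is precisely what the paper does in Section~\ref{sec:ProofLargeCGF}: the paper's $\psi_\pm$ of~\eqref{eq:PhiPsi} \emph{are} your frozen roots, and the two-sided comparison~\eqref{eq:Bcontrol} is your ``sandwich''. What you leave as the crux (``making the adiabatic tracking quantitative'') is exactly the step the paper handles by introducing the auxiliary barriers $\phi_\pm$ and running a second comparison for $f=B-\beta_0t^\alpha-\beta_1$, with a case split on whether $f$ ever crosses $\phi_+$; that device (not adiabatic perturbation theory) is what you would need to supply. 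Your supplementary apparatus — the Riccati-to-linear substitution $B=\tfrac{2}{\xi^2}\psi'/\psi$, the normal form $\chi''+V_u\chi=0$, and Sturm oscillation theory — is not used in the paper at all, and it actually treats the domain question more transparently: for $d>0$ the paper only shows $B(t)\le \kappa^2t/(2\xi^2)+u(1-u)t^{1+d}/(2\Gamma(d+2))$ and asserts divergence (finite-time explosion is left implicit), and for $d<0$ the paper never characterises $\Dd_\infty$ beyond $[0,1]\subseteq\Dd_\infty$ plus the Heston comparison of Remark~\ref{rem:Limitmgf}, whereas your oscillation dichotomy at least gives a mechanism for the existence of $u_\pm$. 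One concrete warning on the constant: if you actually carry out the last step of~(i), $A(t)=-\kappa\theta\int_0^t B(s)\,\D s\sim-\kappa\theta\beta_0\,t^{1+d/2}/(1+d/2)$ with $\beta_0=\xi^{-1}\sqrt{u(1-u)/\Gamma(1+d)}$, so $t^{-(1+d/2)}m(u,t)\to-\kappa\theta\beta_0/(1+d/2)$, which differs from the stated $\Lambda_+(u)=-\kappa\theta\beta_0$ of~\eqref{eq:RateFunctions} by the factor $1/(1+d/2)$; you should track this factor explicitly rather than asserting the target formula, since your own computation (and the paper's $B(t)\sim\beta_0 t^{d/2}$) produces it.
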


\begin{remark}\label{rem:Limitmgf}
The limits above are not continuous in $d$ at the origin.
In the case $d=0$ (standard Heston), the pointwise limit of the rescaled cumulant generating function was computed in~\cite{FJ10} and is such that
$\lim_{t\uparrow\infty}t^{-1}\mm(u,t)$ is a smooth convex function on some interval 
$[u_-^0,u_+^0]\supset [0,1]$.
In Proposition~\ref{prop:LargeTimeCGF}, the characterisation of the limiting domain $\Dd_\infty$ is not
fully explicit. However, using comparison principles between the ODEs~\eqref{eq:ABfracH}
and the corresponding ones in the standard (uncorrelated) Heston model, 
it is easy to see that the interval $[u_-, u_+]$ is contained in $[u_-^H, u_+^H]$, 
which is the limiting domain of the rescaled moment generating function 
in the uncorrelated Heston model (see~\cite{FJ10} for details and explicit expressions for~$u_\pm^H$).
\end{remark}

\begin{proof}[Proof of Theorem~\ref{thm:LDP}]
From Proposition~\ref{prop:SmallTimeCGF}, the large deviations principle stated in~(i)(a)-(b)
follows from a direct application of the G\"artner-Ellis theorem (Theorem~\ref{thm:GE}).
Consider now the large-time behaviour, and start with Case~(ii)(a), i.e. $d\in (0,1/2)$.
From Proposition~\ref{prop:LargeTimeCGF}, the function~$\Lambda_+$ is essentially smooth on~$\Dd_\infty$,
but the origin is not in the interior of~$\Dd_\infty$, 
and hence the G\"artner-Ellis theorem does not apply directly.
However, in the one-dimensional case, one can use the refined version in~\cite{Brien},
which relaxes this assumption.
Case (ii)(b) is a direct application of the G\"artner-Ellis theorem on the effective domain
$[u_-,u_+]$.
\end{proof}
\begin{remark}
One could prove the lower bound on the whole real line even when the steepness condition is not satisfied,
by introducing a well chosen time-dependent change of measure, as in~\cite{Bercu} or~\cite{JR13};
however, this requires knowledge of higher orders of the asymptotic behaviour of the cumulant generating function. We leave this for future research.
\end{remark}
\begin{figure}[h]
\centering
\includegraphics[scale=0.37]{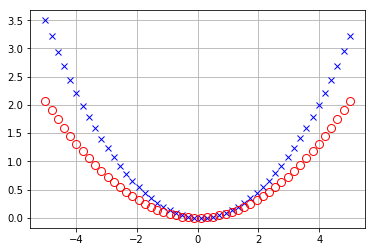}\qquad
\includegraphics[scale=0.37]{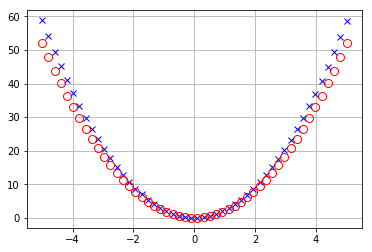}\\
\includegraphics[scale=0.37]{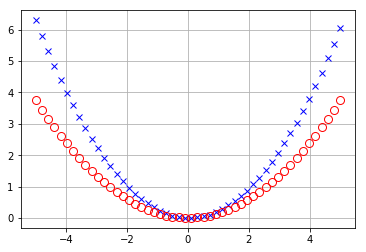}\qquad
\includegraphics[scale=0.37]{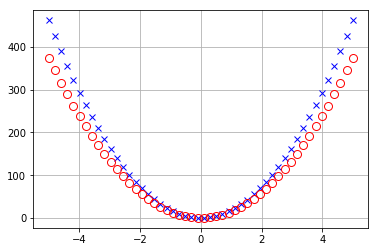}
\caption{We compute the rescaled cgf of~$X_t$ (blue),
and compared it with the small-time limits (red) given in Proposition~\ref{prop:SmallTimeCGF}.
The time parameter is $t=1/10$ on the left column, and $t=10^{-3}$ on the right.
The first row is the rough case where~$d=-0.3$, 
and for the smooth case we have~$d=0.2$ at the bottom. 
$(\kappa,\theta, \xi, v_0, \eta) = (2.1,0.06, 0.2, 0.03, 0.03)$.}
\label{fig:LogMGF}
\end{figure}
\subsection{Implied volatility asymptotics}
We now translate the large deviations principles derived above into asymptotics of the implied volatility.
For any $(x,t)\in \RR\times\RR_+$, let~$\Sigma(x,t)$ denote the implied volatility corresponding to 
European option prices with maturity~$t$ and strike~$\E^{x}$.
The following theorem is proved in Section~\ref{sec:ProofSmallVol} (small time)
and in Section~\ref{sec:ProofLargeVol} (large time).
\begin{theorem}\label{thm:ImpliedVol}\ 
\begin{enumerate}[(i)]
\item As $t$ tends to zero,
\begin{enumerate}[(i)]
\item If $d\in (0,1/2]$, then for any $x\ne 0$, $\Sigma(x,t)^2$ converges to $\eta$;
\item if $d\in [-1/2,0)$, then for any $x\ne 0$, $t^{-d}\Sigma(x,t)^2$ 
converges to $v_0 / \Gamma(d+2)$;
\end{enumerate}
\item as $t$ tends to infinity, the implied volatility behaves as follows:
\begin{enumerate}[(a)]
\item if $d \in (0,1/2)$, then 
$\lim\limits_{t\uparrow\infty}t^{-d/2}\Sigma(x t^{1+d/2})^2
 = 2\left(2\Lambda_+^*(x) - x + \sqrt{\Lambda_+^*(x)\left(\Lambda_+^*(x)-x\right)} \right)$,
for all $x\in\RR$;
 \item if $d \in (-1/2,0)$, then $\Dd_\infty \supset [0,1]$ and
$\lim_{t\uparrow\infty}\Sigma(xt,t)^2 = \eta$, 
for all $x\in (\Lambda'_-(u_-),\Lambda'_-(u_+))$;
\end{enumerate}
\end{enumerate}
\end{theorem}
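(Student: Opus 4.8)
The plan is to feed each large deviations principle from Theorem~\ref{thm:LDP} into a general transfer theorem relating LDPs for (rescaled) log-stock prices to implied volatility asymptotics. For the short-maturity regime, the relevant machinery is the by-now classical correspondence (see e.g.\ Gao--Lee, Forde--Jacquier) between a small-time LDP for $X_t$ with speed $t^{-\gamma}$ and good rate function $\Lambda^*$ and the leading-order behaviour of $\Sigma(x,t)$; in case (i)(a) the rate function is $\lambda_+^*(x)=x^2/(2\eta)$, and since this is \emph{exactly} the small-time rate function of the Black--Scholes model with volatility $\sqrt{\eta}$, one reads off directly that $\Sigma(x,t)^2\to\eta$ for every $x\neq 0$. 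In case (i)(b) the speed is $t^{-(1+d)}$ rather than $t^{-1}$, so the relation between the rate function and the implied volatility must be rescaled: one writes the Black--Scholes call price asymptotically via its own small-time expansion, equates exponential rates, and the extra factor $t^{-d}$ appears precisely because $\lambda_-^*(x)=\Gamma(d+2)x^2/(2v_0)$ plays against the Black--Scholes rate $x^2/(2t\Sigma^2)$ evaluated with $\Sigma^2\sim v_0 t^d/\Gamma(d+2)$.

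First I would set up, once and for all, a lemma (or invoke the appendix reminder) converting a call-price LDP into an implied-volatility limit, covering non-unit speeds; this is the standard ``$-t^{\gamma}\log$-price $\to$ rate function'' argument combined with asymptotic inversion of the Black--Scholes formula. Then each of the four cases is a substitution. For (ii)(a), the rescaled process $t^{-(1+d/2)}X_t$ satisfies an LDP with rate $\Lambda_+^*$ and speed $t^{-(1+d/2)}$; strikes of interest are $\E^{xt^{1+d/2}}$, so I would apply the large-maturity analogue of the transfer theorem (the one producing the $2(2\Lambda^*-x+\sqrt{\Lambda^*(\Lambda^*-x)})$ shape familiar from the large-time Heston asymptotics of Forde--Jacquier and Jacquier--Keller-Ressel--Mijatovi\'c), being careful that here ``large time'' is measured on the $t^{1+d/2}$ scale, which is why $t^{-d/2}\Sigma^2$ rather than $\Sigma^2$ converges. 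For (ii)(b), the process $t^{-1}X_t$ satisfies only a \emph{partial} LDP on $(\Lambda_-'(u_-),\Lambda_-'(u_+))$ with rate $\Lambda_-^*$ and speed $t^{-1}$; but on that open interval $\Lambda_-^*(x)=(x+\eta/2)^2/(2\eta)$, which is the large-time Black--Scholes rate function with volatility $\sqrt{\eta}$, so the limit is again $\eta$; the restriction to $x\in(\Lambda_-'(u_-),\Lambda_-'(u_+))$ is exactly the set where the partial LDP controls the tail, and the remark after Theorem~\ref{thm:LDP} guarantees $0$ lies in its interior so the at-the-money case is covered.

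The main obstacle is handling the \emph{partial} LDP in case (ii)(b): the transfer theorems in the literature assume a full LDP, so I would need to check that the one-sided bounds valid on $(\Lambda_-'(u_-),\Lambda_-'(u_+))$ still suffice to pin down both the upper and lower bounds for the normalised call price at strikes $\E^{xt}$ with $x$ in that interval. Concretely, for $x$ in this range the optimising $u$ in the Legendre transform lies strictly inside $(u_-,u_+)=\mathrm{int}\,\Dd_\infty$, so the usual tilting/change-of-measure argument (exponential Chebyshev for the upper bound, a localized change of measure plus the pointwise convergence of $t^{-1}m(u,t)\to\Lambda_-(u)$ from Proposition~\ref{prop:LargeTimeCGF} for the lower bound) goes through without needing steepness at the boundary. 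A secondary technical point is the small-time case $d\in(-1/2,0)$: the call price must be shown to have the right exponential decay \emph{uniformly} enough to invert Black--Scholes at the anomalous speed $t^{-(1+d)}$, which requires a moment/Chebyshev estimate using $\Dd_0^{(\delta)}=\RR$ from Proposition~\ref{prop:SmallTimeCGF} together with a Gärtner--Ellis upper bound; once the leading exponential order of the out-of-the-money price is identified as $\exp(-\lambda_\mp^*(x)\,t^{-(1+d)})$ for $d\gtrless 0$ respectively, matching against the Black--Scholes asymptotic fixes $\Sigma^2$ (resp.\ $t^{-d}\Sigma^2$) as claimed. Everything else is bookkeeping: checking continuity/positivity of the rate functions (already recorded in the text) so the inversions are well-defined, and confirming the advertised normalisations by matching powers of $t$.
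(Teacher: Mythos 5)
Your plan is essentially the same as the paper's proof: feed the large deviations principles of Theorem~\ref{thm:LDP} into Black--Scholes asymptotic inversion, using the standard put/call/covered-call exponential rates and the ansatz $\Sigma\sim\sigma_0 t^{d/2}$ (small time, $d<0$) and $\widetilde\sigma\sim\sqrt{2}\,t^{d/4}\bigl(\sqrt{\Lambda_+^*}+\sqrt{\Lambda_+^*-x}\bigr)$ (large time, $d>0$), with the partial LDP on $(\Lambda_-'(u_-),\Lambda_-'(u_+))$ sufficing in case (ii)(b) because the optimiser stays in the interior of $\Dd_\infty$. The paper carries out these steps explicitly (matching exponential rates of $C^{\BS}$ and invoking~\cite{JKRM} for the put/call/covered-call formulas), whereas you phrase them as an appeal to a general transfer lemma; substantively the routes coincide, and your extra remarks on the partial LDP in (ii)(b) correctly supply what the paper dismisses as ``analogous and omitted.''
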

For large maturities, as $d$ approaches zero from above, 
we recover the standard Heston implied volatility asymptotics derived in~\cite{FJ10};
in the long memory case ($d>0$), the steepness of the implied volatility is more pronounced for very large maturity due to the $t^{d/2}$ factor than the standard Heston model.
The small-maturity case is especially interesting.
In the case of long memory ($d>0$), the implied volatility converges to a constant at the same speed as Black-Scholes (or for that matter as any diffusion with continuous paths).
In the short-memory regime, the implied volatility blows up at the speed $t^{d/2}$.
It is well documented~\cite{BCC97} that classical stochastic volatility models (driven by standard Brownian motions) 
are not able to capture the observed steepness of the observed implied volatility smile.
Several authors~\cite{MijTankov, Tankov} have suggested the addition of jumps in order to fit this steepness.
Assume that the martingale stock price is given by $S = \E^{X}$, where $X$ is a L\'evy process 
with L\'evy measure supported on the whole real line.
As the maturity~$t$ tends to zero, the corresponding implied volatility behaves as
$\lim_{t\downarrow 0}2 t\log(t) \Sigma(x,t)^2= -x^2$,
for all $x\ne 0$.
The speed of divergence $t\log(t)$ is then to be compared with the $t^d$ ($d\in (-1/2,0)$) 
in the fractional framework above:
clearly, for small enough~$t$, the L\'evy implied volatility blows up much faster.
Note further that, in the limit as maturity tends to zero, the latter does depend on the strike,
but the fractional implied volatility does not.
Our results therefore show that fractional Brownian motion (driving the instantaneous volatility) are an alternative to jumps, and provide smiles steeper than standard stochastic volatility, but less steep than L\'evy models. 
They further have the advantage of bypassing the hedging issues in jump-based models.
One should also compare this explosion rate to that of the implied volatility 
in the standard Heston model, where the instantaneous variance is started from an initial distribution.
In~\cite{JR13}, the authors chose a non-central chi-squared initial distribution and showed that,
under some appropriate rescaling, the implied volatility blows up at speed~$t^{-1/2}$.
Jacquier and Shi~\cite{Shi17}, also inspired by~\cite{Mechkov}, pushed this analysis further
by studying the impact of the random initial data on the short-time explosion of the smile.


\section{Numerics}\label{sec:numerics}
In this section we provide numerical examples describing the behaviour of our model,
with parameters $(\kappa, \theta, \xi)= (2.1, 0.06, 0.2)$. 
In Figure~\ref{fig:VolSurf} we provide a comparison of volatility surfaces 
generated by fractional and standard Heston models. 
We observe that in the case where the Hurst parameter is less than~$\frac{1}{2}$ ($d<0$),
a larger value of~$v_0$ pushes up the small-time volatility smile,
resonating our small-time analysis in Theorem~\ref{thm:ImpliedVol}.
Also notice that with similar parameter choices 
the at-the-money implied volatilities are higher in the fractional case. 
This is further confirmed by Figure~\ref{fig:totvar} representing the term structure of the at-the-money total variance.

\begin{figure}[h]
\centering
\includegraphics[scale=0.37]{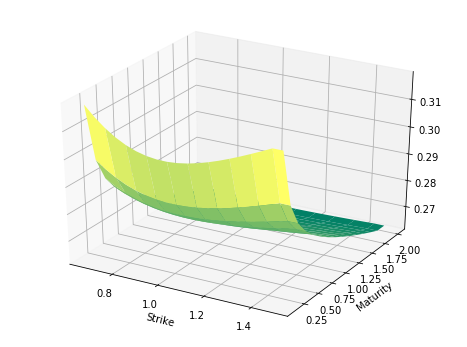}\qquad
\includegraphics[scale=0.37]{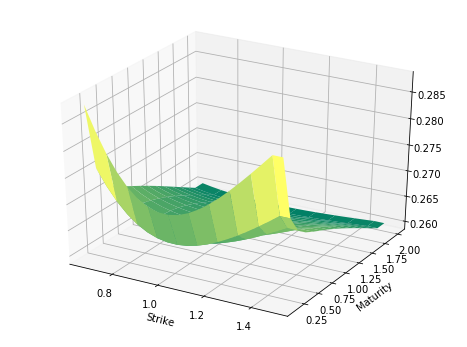}\\
\includegraphics[scale=0.37]{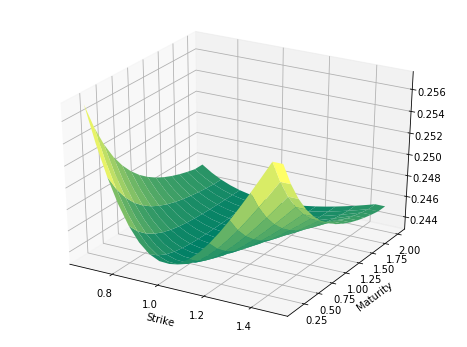}\qquad
\caption{
Comparison of volatility surfaces.
The first row represents the fractional case with
$(v_0, \eta, d) = (0.03, 0.02, -0.3)$ on the left and
$(v_0, \eta, d) = (0.02, 0.02, -0.3)$ on the right.
The bottom plot is the standard Heston case ($d=0$) with $v_0 = 0.06$.
}
\label{fig:VolSurf}
\end{figure}
\begin{figure}[h]
\centering
\includegraphics[scale=0.37]{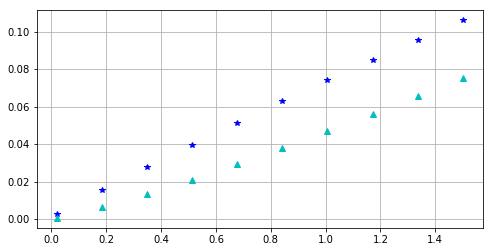}
\caption{
Term structure of the at-the-money total variance 
in fractional (blue, $(v_0, \eta, d) = (0.03,0.02, -0.3)$) and standard Heston (cyan, $v_0 = 0.06$).
}
\label{fig:totvar}
\end{figure}
\begin{figure}[h]
\centering
\includegraphics[scale = .37]{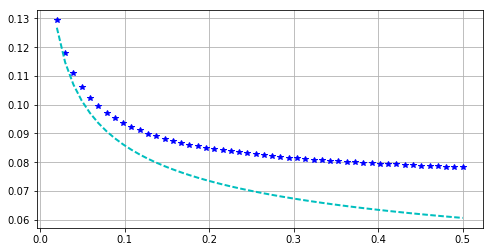}
\caption{
Term structure of the expected annualised variance (blue) in~\eqref{eq:VS} where
$(v_0,\eta, d) = (0.03, 0.02, -0.3)$.
Cyan dashed lines represent its small-time approximation~$\left( \frac{v_0T^d}{\Gamma(d+2)} + \eta\right)$.
}
\end{figure}

\section{Proofs}\label{sec:Proofs}
\subsection{Proof of Proposition~\ref{prop:SmallTimeCGF}}\label{sec:ProofSmallCGF}
We separate the $\kappa=0$ and $\kappa > 0$ in the proof.

\subsubsection{The $\kappa=0$ case}
One could use the explicit knowledge of the moment generating function in Remark~\ref{rem:Explicitkappa0}
to compute its limiting behaviour.
However, we follow a different route here, which we can adapt later to the $\kappa>0$ case.
The function $B$ is the solution of the ODE
\begin{equation}\label{eq: odeB}
B'(t)  = \displaystyle -\frac{\xi ^2}{2}B^2(t)+\frac{u(1-u)}{2\Gamma(d+1)}t^d,
\end{equation}
with boundary condition $B(0) =0$.
Let $\zeta:=\frac{u(1-u)}{2\Gamma(d+2)}$, and consider the ansatz
$B(t)=\sum \limits_{i\geq 1} \alpha_i \zeta^i t^{id+2i-1}$, 
for some sequence of real numbers $(\alpha)_{i\geq 1}$. 
Therefore
\begin{align*}
B'(t)  +\frac{\xi ^2}{2}B^2(t)-\frac{u(1-u)}{2\Gamma(d+1)}t^d
 & = \sum \limits_{i\geq 1}\alpha_i \zeta^i (id+2i-1) t^{id+2i-2}
 + \frac{\xi ^2}{2}\sum \limits_{i,j\geq1} \alpha_i \alpha_j \zeta^{i+j} t^{(i+j)d+2(i+j)-2}
 - \frac{u(1-u)}{2\Gamma(d+1)}t^d \\
 & = \left(\alpha_1 (1+d) \zeta -\frac{u(1-u)}{2\Gamma(d+1)}\right) t^d +\sum \limits_{i\geq 2} \left[\alpha_i  (id+2i-1)+\frac{\xi ^2}{2}\sum \limits_{k=1}^{i-1} \alpha_k \alpha_{i-k} \right]  \zeta^i t^{id+2i-2}.
\end{align*}
This polynomial (in~$t$) is null everywhere if and only if 
$\alpha_1 = 1$ and 
$\alpha_i = \displaystyle -\frac{\xi ^2}{2(i(d+2)-1)}\sum \limits_{k=1}^{i-1} \alpha_k \alpha_{i-k}$,
for $i\geq 2$.
We now make the above derivation rigorous.
Consider the map $f:\RR_+\to\RR$ defined by
$f(t) := t^{-1}\sum_{i\geq 1}\alpha_i \left(\zeta t^{d+2}\right)^i$,
for $t\geq 0$.
For $i$ large enough, $\left|\frac{\xi ^2}{2(i(d+2)-1)}\right|<1$, and hence $|\alpha_i|<|\gamma_i|$, 
where the sequence $(\gamma)_{i\geq 1}$ is defined by
$\gamma_1 = 1$
and
$\gamma_i =\sum_{k=1}^{i-1} \gamma_k \gamma_{i-k}$, for $i\geq 2$.
Tedious but straightforward computations show that the power series
$m(x)\equiv\sum_{i\geq 1} \gamma_i x^i$ has a radius of convergence $1/4$,
and $m'(x)=\sum_{i\geq 1} \gamma_i i x^{i-1}$ for all $x\in (-1/4,1/4)$.
Therefore, the power series $g(x)\equiv\sum_{i\geq 1} \alpha_i x^i$ 
has a radius of convergence greater than~$1/4$ and 
$g'(x)=\sum_{i\geq 1} \alpha_i i x^{i-1}$ for all $x\in (-1/4,1/4)$.
For $|\zeta x^{d+2}|$ small enough, 
$$
\frac{\D}{\D x} \sum_{i\geq 1} \alpha_i \left(\zeta x^{d+2}\right)^i
 = (d+2)\zeta  x^{d+1}\sum_{i\geq 1} \alpha_i i (\zeta x^{d+2})^{i-1},
$$
so that $f'(x)= \sum_{i\geq 1} \alpha_i \zeta^i (id+2i-1) t^{id+2i-2}$. 
Moreover, using Tonelli's theorem, for $|\zeta t^{d+2}|$ small enough,
$$
\sum \limits_{i\geq1, j\geq 1}|\alpha_i | |\alpha_j ||\zeta |^{i+j} t^{(i+j)d+2(i+j)-2}
 = \left(\sum \limits_{i\geq1}|\alpha_i | |\zeta |^{i} t^{id+2i-1}\right)^2
$$
is finite.
A direct application of Fubini's theorem then yields
$f^2 (t)=\sum_{i, j\geq 1} \alpha_i \alpha_j  \zeta^{i+j} t^{(i+j)d+2(i+j)-2}$,
and hence~$f$ is the solution of~\ref{eq: odeB} for small~$|\zeta t^{d+2}|$.

We first prove the proposition in the case $d\in (0,1/2)$.
The function $B$, solution to~\eqref{eq: odeB}, satisfies:
\begin{equation*}
\begin{array}{rl}
\displaystyle t^{1-d}B\left(\frac{u}{t},t\right)
&=\displaystyle t^{-d}\sum_{i\geq 1} \alpha_i \left( \frac{u t^{d+1}}{2\Gamma(2+d)} -
 \frac{u^2 t^d}{2\Gamma(2+d)}\right)^i
=\displaystyle  t^{-d}\sum_{i\geq 1} \alpha_i t^{id}\left( \frac{u t}{2\Gamma(2+d)}
 - \frac{u^2}{2\Gamma(2+d)}\right)^i \\
&=\displaystyle  \sum_{i\geq 1} \alpha_i t^{(i-1)d}\left( \frac{u t}{2\Gamma(2+d)}
 - \frac{u^2}{2\Gamma(2+d)}\right)^i \\
&=\displaystyle \frac{u t}{2\Gamma(2+d)}
 - \frac{u^2}{2\Gamma(2+d)} + \sum_{i\geq 1} \alpha_{i+1} t^{id}\left(\frac{u t}{2\Gamma(2+d)}
 - \frac{u^2}{2\Gamma(2+d)}\right)^{i+1}.
\end{array}
\end{equation*}
Moreover, for $ t $ small enough,
$$
\left| \sum_{i\geq 1} \alpha_{i+1} t^{id}
\left( \frac{ut}{2\Gamma(2+d)} - \frac{u^2}{2\Gamma(2+d)}\right)^{i+1} \right|
 \leq t^{d/2} \left|  \left( \frac{ut}{2\Gamma(2+d)} - \frac{u^2}{2\Gamma(2+d)}\right)
 \sum_{i\geq 1} \alpha_{i+1}\left( \frac{u t^{1+d/2}}{2\Gamma(2+d)} -\frac{u^2 t^{d/2}}{2\Gamma(2+d)} \right)^{i} \right|.
$$
Since for $t$ small enough, 
$\left|\frac{u t^{1+d/2}}{2\Gamma(2+d)}-\frac{u^2 t^{d/2}}{2\Gamma(2+d)}\right|<\frac{1}{4}$, 
then 
$\sum_{i\geq 1} \alpha_{i+1}\left( \frac{u t^{1+d/2}}{2\Gamma(2+d)}-\frac{u^2 t^{d/2}}{2\Gamma(2+d)} \right)^i$ is well defined. 
Hence the left-hand side $\sum_{i\geq 1} \alpha_{i+1} t^{id}\left( \frac{ut}{2\Gamma(2+d)}-\frac{u^2}{2\Gamma(2+d)}\right)^{i+1}$ converges to zero as $t$ tends to zero, 
and so does $t^{1-d}B(u/t,t)$ to $-\frac{u^2}{2\Gamma(2+d)}$.

In the case $d\in [-1/2,0)$, the function $B$, solution to~\eqref{eq: odeB}, satisfies
\begin{equation*}
\begin{array}{rl}
\displaystyle t^{1+d}B(u/t^{1+d},t)
&=\displaystyle t^{d}\sum_{i\geq 1} \alpha_i \left( \frac{u t}{2\Gamma(2+d)}-\frac{u^2 t^{-d}}{2\Gamma(2+d)} \right)^i \\
&=\displaystyle \frac{u t^{1+d}}{2\Gamma(2+d)} -\frac{u^2}{2\Gamma(2+d)}
 + t^{d} \sum_{i\geq 1} \alpha_{i+1} \left( \frac{u t}{2\Gamma(2+d)}
 - \frac{u^2 t^{-d}}{2\Gamma(2+d)}\right)^{i+1} \\
&=\displaystyle \frac{u t^{1+d}}{2\Gamma(2+d)} -\frac{u^2}{2\Gamma(2+d)}
 + \left( \frac{u t^{1+d}}{2\Gamma(2+d)} -\frac{u^2}{2\Gamma(2+d)}\right)
\sum_{i\geq 1} \alpha_{i+1} \left( \frac{u t}{2\Gamma(2+d)} - \frac{u^2 t^{-d}}{2\Gamma(2+d)}\right)^{i} \end{array}
\end{equation*}
Moreover, for $t$ small enough, 
$\left| \sum_{i\geq 1} \alpha_{i+1} \left( \frac{ut}{2\Gamma(2+d)}
 - \frac{u^2 t^{-d}}{2\Gamma(2+d)}\right)^{i}\right|
 \leq t^{-d/2} \left|\sum_{i\geq 1} \alpha_{i+1} \left(\frac{u t^{1+d/2}}{2\Gamma(2+d)}
 - \frac{u^2 t^{-d/2}}{2\Gamma(2+d)}\right)^{i}\right|$. 
Since the series 
$ \sum_{i\geq 1} \alpha_{i+1} \left( \frac{u t^{1+d/2}}{2\Gamma(2+d)}
 -\frac{u^2 t^{-d/2}}{2\Gamma(2+d)}\right)^{i}$ 
is well defined for $t$ small enough, 
$\sum_{i\geq 1} \alpha_{i+1} \left( \frac{u t}{2\Gamma(2+d)} - \frac{u^2 t^{-d}}{2\Gamma(2+d)}\right)^{i}$ tends to zero as $t$ tends to zero, and so does
$t^{1+d}B(u/t^{1+d},t)$ tends to $-\frac{u^2}{2\Gamma(2+d)}$,
which proves the proposition.

\subsubsection{The $\kappa>0$ case}
Similarly to the $\kappa=0$ case, plugging the ansatz
$\displaystyle B(t) :=\sum_{i,j\geq 1}\beta_{i,j}t^{id+j}$ into~\eqref{eq:ABfracH} yields
$$
\left(\beta_{1,1}(d+1) + \frac{u(u-1)}{2\Gamma(d+1)}\right)t^d 
+ \frac{\xi^2}{2}\left(\sum_{i,j\geq 1}\beta_{i,j}t^{id+j}\right)^2
+\kappa\left(\sum_{i,j\geq 1}\beta_{i,j}t^{id+j}\right) + \sum_{(i,j)\ne (1,1)}\beta_{i,j}(id+j)t^{id+j-1} \equiv 0.
$$
The following statements hold, and are related to the case $\kappa=0$ above:
\begin{enumerate}
\item $\beta_{i,j} = 0$ for any~$i\geq 2$ and for~$1\leq j \leq 2i-2$;
\item $\beta_{i,2i-1} = \alpha_i\zeta^i$ for any~$i\geq 1$.
\end{enumerate}
Combining terms of the same order yields
$\beta_{1,1} = \alpha_1\zeta$. 
Moreover, for any~$(i,j)\in\NN_+^*\times\left(\NN_+^*\setminus\{1\}\right)$,
$$
\left(\beta_{i,j}(id+j) + \kappa \beta_{i,j-1} + \frac{\xi^2}{2}\left(\sum_{p=1}^{i-1}\sum_{q=1}^{j-2}\beta_{p,q}\beta_{i-p,j-1-q}\right)\right)t^{id+j-1}\equiv 0.
$$
Then the two statements above can be easily verified by induction,
from which
$\displaystyle B(t) = \sum_{i\geq 1}\sum_{j\geq 2i-1}\beta_{i,j}t^{id+j}$, with 
\begin{equation}\label{eq:beta}
\beta_{i,j} = \left[-\kappa\beta_{i,j-1} - \frac{\xi^2}{2}\left(\sum_{p=1}^{i-1}\sum_{q=2p-1}^{j-2i+2p}\beta_{p,q}\beta_{i-p,j-1-q}\right)\right]\frac{1}{id+j}.
\end{equation}
We show that the series is absolutely convergent.
Notice that
\begin{equation}\label{eq:summation}
\sum_{i\geq 1}\sum_{j\geq 2i-1}|\beta_{i,j}|t^{id+j}
= \sum_{i\geq 1}t^{id+2i-1}\left(\sum_{j\geq 2i-1}|\beta_{i,j}|t^{j-2i+1}\right)
= \sum_{i\geq 1}|\beta_{i,2i-1}|t^{id+2i-1}\left(1+\sum_{j=2i}^\infty\left|\frac{\beta_{i,j}}{\beta_{i,2i-1}}\right|t^{j-2i+1}\right).
\end{equation}
Following Lemma~\ref{lem:beta}, 
$1 + \sum_{j=2i}^\infty\left|\frac{\beta_{i,j}}{\beta_{i,2i-1}}\right|t^{j-2i+1}
< 2^{i-1}\E^{i\kappa t}$ holds for any~$i \geq 1$,
and~\eqref{eq:summation} then reads
\begin{equation}\label{eq:AbsSum}
\sum_{i\geq 1}\sum_{j\geq 2i-1}|\beta_{i,j}|t^{id+j}
<t^{-1}\sum_{i\geq 1}2^{i-1}|\alpha_i|\left(|\zeta|t^{d+2}\E^{\kappa t}\right)^i.
\end{equation}
We already show that~$|\alpha_i| < \gamma_i$ for~$i$ large enough,
then the series $(\sum_{i\geq 1}2^{i-1}|\alpha_i| x^i)$ 
has a radius of convergence no less than~$1/8$.
Since~$(d+2)$ is strictly positive, then~\eqref{eq:AbsSum} implies that the series
$B(t) =  \sum_{i\geq 1}\sum_{j\geq 2i-1}\beta_{i,j}t^{id+j}$ is absolutely convergent for small~$t$
such that $|\zeta|t^{d+2}\E^{\kappa t}<1/8$ holds. 
Therefore the previous ansatz~$B(t)$ is the solution to the Riccati equation for small~$t$.
The rest of the proof is essentially the same as the case where~$\kappa=0$, 
and we therefore omit the details.
\begin{lemma}\label{lem:beta}
For any~$i\geq 1$,
the following estimation for~$\beta_{i,j}$ defined in~\eqref{eq:beta}
holds for any~$j\geq 2i$:
\begin{equation*}
|\beta_{i,j}| < \frac{2^{i-1}(i\kappa)^{j-2i+1}}{(j-2i+1)!}|\beta_{i,2i-1}|.
\end{equation*}
\end{lemma}
\begin{proof}
We prove by induction.
In the case where~$i=1$, direct computations yield
\begin{equation*}
\left|\frac{\beta_{1,j}}{\beta_{1,1}} \right| = \frac{\Gamma(d+2)}{\Gamma(d+1+j)}\kappa^{j-1}
\leq \frac{\kappa^{j-1}}{(j-1)!},\quad\text{for any }j\geq 1.
\end{equation*}
Assume that the upper bound holds for~$|\beta_{p,q}|$ for any~$p\leq i-1$.
Then we have
\begin{equation*}
|\beta_{p,q}| \leq \frac{2^{p-1}(p\kappa)^{q-2p+1}|\beta_{p,2p-1}|}{(q-2p+1)!},\qquad
|\beta_{i-p,j-1-q}| \leq \frac{2^{i-p-1}((i-p)\kappa)^{j-q-2i+2p}|\beta_{i-p,2(i-p)-1}|}{(j-q-2i+2p)!}.
\end{equation*}
As a result, for any fixed~$1\leq p\leq i-1$,
\begin{align*}
\sum_{q=2p-1}^{j-2i+2p}|\beta_{p,q}\beta{i-p,j-1-q}|
&\leq \sum_{q=2p-1}^{j-2i+2p}\frac{2^{i-2}(p\kappa)^{q-2p+1}((i-p)\kappa)^{j-q-2i+2p}|\beta_{p,2p-1}\beta_{i-p,2(i-p)-1}|}{(q-2p+1)!(j-q-2i+2p)!}\\
&= \frac{2^{i-2}|\beta_{p,2p-1}\beta_{i-p,2(i-p)-1}|}{(j-2i+1)!}\sum_{q=2p-1}^{j-2i+2p}\frac{(j-2i+1)!(p\kappa)^{q-2p+1}((i-p)\kappa)^{j-q-2i+2p}}{(q-2p+1)!(j-q-2i+2p)!}\\
&= \frac{2^{i-2}(i\kappa)^{j-2i+1}}{(j-2i+1)!}|\beta_{p,2p-1}\beta_{i-p,2(i-p)-1}|.
\end{align*}
Plug it into~\eqref{eq:beta}, and notice that 
$|\beta_{i,2i-1}| = \frac{\xi^2}{2(id+2i-1)}\sum_{k=1}^{i-1}|\beta_{k,2k-1}\beta_{i-k,2(i-k)-1}|$
which follows from the fact that 
all the terms $\beta_{k,2k-1}\beta_{i-k,2(i-k)-1}$ have the same sign for any~$k$,
then
\begin{align*}
|\beta_{i,j}|
&\leq \frac{\kappa |\beta_{i,j-1}|}{id+j} 
+ \frac{\xi^2}{2(id+j)}\sum_{p=1}^{i-1}\sum_{q=2p-1}^{j-2i+2p}|\beta_{p,q}\beta_{i-p,j-1-q}|\\
&\leq \frac{\kappa |\beta_{i,j-1}|}{id+j} 
+ \frac{\xi^2(id+2i-1)2^{i-2}(i\kappa)^{j-2i+1}}{2(id+j)(id+2i-1)(j-2i+1)!}
\sum_{p=1}^{i-1}|\beta_{p,2p-1}\beta_{i-p,2(i-p)-1}|\\
&\leq \frac{\kappa |\beta_{i,j-1}|}{id+j} 
+ \frac{2^{i-2}(i\kappa)^{j-2i+1}(id+2i-1)}{(j-2i+1)!(id+j)}|\beta_{i,2i-1}|.
\end{align*}
Iterating the inequality above, we finally obtain
\begin{align*}
|\beta_{i,j}|
&\leq \frac{\kappa^{j-2i+1}|\beta_{i,2i-1}|}{\prod_{k=2i}^j(id+k)}
+ 2^{i-2}\kappa^{j-2i+1}|\beta_{i,2i-1}|\sum_{k=2i}^{j}\frac{i^{k-2i+1}(id+2i-1)}{(k-2i+1)!\prod_{m=k}^j(id+m)}\\
&\leq \frac{2^{i-2}\kappa^{j-2i+1}|\beta_{i,2i-1}|}{(j-2i+1)!}
\sum_{k=2i-1}^j i^{k-2i+1}
= \frac{2^{i-2}\kappa^{j-2i+1}|\beta_{i,2i-1}|(i^{j-2i+2}-1)}{(j-2i+1)!(i-1)}\\
&<\frac{2^{i-2}(i\kappa)^{j-2i+1}|\beta_{i,2i-1}|i}{(j-2i+1)!(i-1)}
\leq \frac{2^{i-1}(i\kappa)^{j-2i+1}}{(j-2i+1)!}|\beta_{i,2i-1}|.
\end{align*}
\end{proof}

\begin{remark}
In the case $d=0,\eta=0$ and $\kappa=0 $, the cgf  $\mm(u,t)$ corresponds to the standard Heston model with $\rho=0$ and mean reversion speed $\kappa=0$.
Here it is well known~\cite{JF10} that $\lim_{t\downarrow0}t\mm(u/t,t)=v_0u/(\xi \cot(\xi u/2))$.
Using the series expansion solution to ~\eqref{eq: odeB} above we find that 
$\lim_{t\downarrow0}t \mm(u/t,t)=v_0\sum_{i\geq 1}(-1)^{i+1}\alpha_i (u^2/2)^i$.
Explicitly computing the first few terms we find that $\sum_{i\geq 1}(-1)^{i+1}\alpha_i (u^2/2)^i=u^2/2+\xi^2u^4/24+\xi^4u^6/240+\mathcal{O}(\xi^6 u^8)$, which is in exact agreement with a Taylor expansion of $u/(\xi \cot(\xi u/2))$ for small $\xi^2 u$.
\end{remark}

\subsection{Proof of Proposition~\ref{prop:LargeTimeCGF}}\label{sec:ProofLargeCGF}
We start with the case $d \in (0,1/2)$.
Let $B$ be the solution to the ordinary differential equation~\eqref{eq:ABfracH},
and $\alpha:=d/2$, 
$\beta_0 := \frac{1}{\xi}\sqrt{\frac{u(1-u)}{\Gamma(d+1)}}$,
$\beta_1 := - \frac{\kappa}{\xi^2}$.
The function $f(t)\equiv B(t)-\beta_0 t^\alpha-\beta_1$  satisfies
$
f'(t) = -\frac{1}{2}\xi^2f(t)^2-\xi^2 \beta_0 t^{\alpha} f(t)+\frac{\kappa^2}{2\xi^2}-\beta_0\alpha t^{\alpha-1}$,
for $t>0$.
Define now $\psi_-, \psi_+: \RR_+\to\RR$ by
\begin{equation}\label{eq:PhiPsi}
\psi_\pm(t) \equiv -\frac{\kappa}{\xi^2} \pm \frac{1}{\xi}\sqrt{\frac{\kappa^2}{\xi^2}+\frac{u(1-u)t^d}{\Gamma(d+1)}}.
\end{equation}
We now claim that for any $t>0$, the following inequalities hold:
\begin{equation}\label{eq:Bcontrol}
\psi_-(t) \leq B(t) \leq \psi_+(t).
\end{equation}
Note that 
\begin{equation}\label{eq:Psipm}
\psi_{\pm}'(t) = \pm\frac{\alpha \beta_0^2 t^{2\alpha -1}}{\sqrt{\beta_1^2+\beta_0^2t^{2\alpha}}},
\end{equation}
which implies that $\psi_-'(t) \le 0\le \psi_+'(t)$, for all $t>0$, and 
\begin{align*}
& \lim\limits_{t \downarrow 0}\psi_-'(t) = -\infty, \quad 
B'(0)=0, \quad 
\lim\limits_{t \downarrow 0}\psi_+'(t)=+\infty,\\
& \psi_-(0) = -2\kappa/\xi^2\le B(0) \leq \psi_+(0)=0.
\end{align*}
Furthermore, note that 
\begin{equation}\label{eq:BPrice}
B'(t)=-\frac{\xi^2}{2}(B(t) - \psi_+(t))(B(t) - \psi_-(t)).
\end{equation}
If $\tau := \sup\{t>0: \psi_-(s)\leq B(s)\leq\psi_+(s), \text{ for all }0\leq s\leq t\}$ is finite, then
from monotonicity and continuity of~$B$ on~$[0,\tau]$ we have
$B(\tau) = \psi_+(\tau)$ and $B'(\tau) = 0$.
As a result, $B''(\tau) = \xi^2(\psi_+(\tau) - \psi_-(\tau))\psi_+'(\tau)/2>0$, 
implying that there exists $\tau^*>\tau$ such that~$B'(t)>B'(\tau) = 0$ for $t\in(\tau,\tau^*]$,
hence $B(t)\in(\psi_-(t), \psi_+(t))$ for $t\in(\tau,\tau^*]$, contradicting the finiteness of~$\tau$,
and therefore $\tau = \infty$.
We now prove that $\lim\limits_{t \uparrow\infty} t^{-\alpha}B(t)=\beta_0$.
Define the functions $\phi_+, \phi_-:(t^*,+\infty)\to\RR$ by
$$
\phi_\pm(t) =  -\beta_0 t^\alpha \pm 
\left\{\beta_0^2 t^{2\alpha}+\frac{1}{\xi^2}\left(\frac{\kappa^2}{\xi^2}-2\beta_0\alpha t^{\alpha-1}\right)\right\}^{1/2},
$$
where $t^*\geq 0$ is large enough so that~$\phi_\pm(t)$ exists as a real number ($\alpha-1<0$)
for~$t>t^*$.
For large~$t$, we have
\begin{equation}\label{eq:Really}
\phi_+(t)=\frac{\kappa^2}{2\xi^4 \beta_0}t^{-\alpha} + \mathcal{O}(t^{-3\alpha})
\qquad\text{and}\qquad
\phi_-(t)=-2\beta_0 t^\alpha + \mathcal{O}(t^{-\alpha}).
\end{equation}
For large~$t$, $\phi_+'(t)<0$ and $\phi_-'(t)<0$.
Since $f'(t)=-\frac{\xi^2}{2}(f(t) - \phi_+(t))(f(t)-\phi_-(t))$, 
two cases can occur:
\begin{itemize}
\item there exists $t_0>0$ such that $\phi_+(t_0)\leq f(t_0)$;
\item for all $t>t^*$, $\phi_+(t)>f(t)$;
\end{itemize}
In the first case, the comparison principle implies that for all $t>t_0$, $\phi_+(t)\leq f(t)$ 
(similarly to the proof of~\eqref{eq:Bcontrol}).
Therefore~\eqref{eq:Bcontrol} and 
the definition of~$f$ yield
$\beta_0 t^{\alpha}+\beta_1 + \phi_+(t) \le B(t) \le \psi_+(t)$,
which proves the result.
In the second case, let us prove first that there exists $t_1>0$ such that  $f(t_1)>\phi_-(t_1)$:
if for all $t$ such that $\phi_-$ is well defined we have $f(t)\leq \phi_-(t)$, then
$$
\psi_-(t)\le B(t)=f(t)+\beta_1+\beta_0 t^{\alpha} \le \phi_-(t)+\beta_1+\beta_0 t^{\alpha},
$$
which implies $B(t)\sim -\beta_0 t^{\alpha}$, which contradicts the positivity of~$B'$ 
(see~\eqref{eq:Bcontrol} and~\eqref{eq:BPrice}).
Thus there exists $t_1>0$ such that $f(t_1)>\phi_-(t_1)$.
Therefore, since, for all $t>t^*$,
$$
\phi_-'(t) = -\beta_0\alpha t^{\alpha-1}
- \left(\beta_0^2\alpha t^{2\alpha-1} 
+ \frac{\beta_0\alpha(1-\alpha)}{\xi^2 t^{2-\alpha}}\right)\left[\beta_0^2 t^{2\alpha}+\frac{1}{\xi^2}\left(\frac{\kappa^2}{\xi^2}- 2 \beta_0\alpha t^{\alpha-1}\right)\right]^{-1/2} <0,
$$
the comparison principle implies that 
$\phi_-(t)\leq f(t)\leq \phi_+(t)$ for all $t>t_1$.
Therefore~$f$ is non-decreasing on $(t_1, \infty)$;
being bounded by $\phi_+$, which tends to zero at infinity, it converges to a constant, and 
$\lim\limits_{t \uparrow\infty} t^{-\alpha}B(t)=\beta_0$.

We now prove that the effective domain $\Dd_t$ converges to $[0,1]$ as $t$ tends to infinity.
For any $u \in \RR\setminus [0,1]$,
$$
B'(t)=-\frac{\xi^2}{2}\left( B^2 (t) + \frac{2\kappa B(t)}{\xi^2}\right) +\frac{u(1-u)}{2\Gamma(1+d)}t^d
 = -\frac{\xi^2}{2}\left(B(t) + \frac{\kappa}{\xi^2}\right)^2+\frac{\kappa^2}{2\xi^2}  +\frac{u(1-u)}{2\Gamma(1+d)}t^d.
$$
Therefore, $B'(t)\leq \frac{\kappa^2}{2\xi^2} +\frac{u(1-u)}{2\Gamma(1+d)}t^d$, so that
$B(t) \leq \kappa^2 t/2\xi^2  +\frac{u(1-u)}{2\Gamma(d+2)}t^{1+d}$, and hence
$\lim\limits_{t \uparrow +\infty} t^{-d/2}B(t)=-\infty$.
Since $A(t) = -\kappa\theta \int_{0}^{t} B(s) \D s$, 
and since $[0,1]$ is always in $\Dd_t$, for any $t\geq 0$ (since the process is a martingale),
Part~(i) of the proposition follows from Theorem~\ref{thm:cgf}.\\

We now move on to part~(ii) of the proposition, when $d \in (-1/2,0)$.
Let us first prove that $\lim\limits_{t \uparrow\infty} B(t)=0$.
Consider the functions $\psi_-, \psi_+$ as defined in~\eqref{eq:PhiPsi}. 
As~$t$ tends to zero, $\psi_+$ diverges to $+\infty$ and $\psi_-$ to $-\infty$, so that
$\lim\limits_{t \downarrow0} \psi_-(t) <B(0)<\lim\limits_{t \downarrow0} \psi_+(t)$, 
and hence, from~\eqref{eq:BPrice}, $B'$ is positive in the neighbourhood of the origin.
Moreover, for all $t>0$, Equation~\eqref{eq:Psipm} implies that 
$\psi_+'(t)<0<\psi_-'(t)$.
Let 
$t_0:=\sup\{t>0: \psi_-(s)<B(s)<\psi_+(s), \text{ for all }0<s<t\}$.
If $t_0$ is finite, then $B(t_0)\geq \psi_+(t_0)$ because $\psi_-(t)<0$ and $B$ is positive and increasing in $(0,t_0)$.
Therefore, $B(t_0)=\psi_+(t_0)$ and $B(t)\geq \psi_+(t)$ for all $t>t_0$ by comparison principle,
and hence $B$ is decreasing on $(t_0,+\infty)$. 
Since it is bounded below by $\lim\limits_{t \uparrow\infty} \psi_+(t)=0$,
it converges at infinity to some constant $C\geq 0$.
From the Riccati equation~\eqref{eq:ABfracH}, $B'$ then converges to $-\kappa C-\frac{1}{2}\xi^2 C^2$,
and necessarily $C=0$. 
If $t_0$ is infinite, then~$B$ is increasing and bounded from above by~$\psi_+$, 
which tends to $0$ at infinity;
this yields a contradiction since $B$ is increasing, so that $\lim\limits_{t \uparrow\infty} B(t)=0$.
Finally, since~$S$ is a martingale and the moment generating function is convex, $[0,1]\subseteq \mathcal{D}_t$ for all $t>0$ and hence $[0,1]\subseteq \mathcal{D}_{\infty}$.

\subsection{Proof of Theorem~\ref{thm:ImpliedVol}(i)}\label{sec:ProofSmallVol}
In this section and the next, the process $(X_t^{\BS})_{t\geq 0}$ shall denote the unique strong solution
starting from the origin to the Black-Scholes stochastic differential equation 
$\D X_t^{\BS} = -\frac{1}{2}\Sigma^2 \D t + \Sigma \D B_t$, 
for some given $\Sigma>0$, for $t>0$.
In this model, the price of a European Call option with maturity $t$ and strike $\E^{x}$ ($x\in\RR$) is given by
$$
C^{\BS}(x,t,\Sigma) = \Nn\left(-\frac{x}{\Sigma\sqrt{t}} + \frac{\Sigma\sqrt{t}}{2}\right)
-\E^{x}\Nn\left(-\frac{x}{\Sigma\sqrt{t}} - \frac{\Sigma\sqrt{t}}{2}\right),
$$
where $\Nn$ denotes the Gaussian cumulative distribution function.
Straightforward computations yield
$\log\EE\left( \E^{uX^{\BS}_t} \right) = \displaystyle \frac{1}{2}u(u-1)\Sigma^2t$, 
for all $u\in\RR$. 
In~\cite{FJ10}, the authors proved that the process $(X_t^{\BS})_{t\geq0}$ 
satisfies a large deviations principle with speed~$t^{-1}$ 
and good rate function $x\mapsto x^2/(2\Sigma^2)$,
which implies that the following limits hold:
$$
\lim_{t\downarrow 0}t\log \EE\left(\E^{x}-\E^{X_t^{\BS}}\right)_+
 = -\frac{x^2}{2\Sigma^2}
\quad \text{for } x\leq 0, 
\qquad\text{and}\qquad
\lim_{t\downarrow 0}t\log\EE\left(\E^{X_t^{\BS}}-\E^{x}\right)_+
 = -\frac{x^2}{2\Sigma^2}
\quad \text{for } x\geq 0.
$$
When $d\in (0,1/2)$, from Theorem~\ref{thm:LDP}(i), 
we can mimic this proof to obtain
$$
\lim_{t\downarrow 0}t\log \EE\left(\E^{x}-\E^{X_t}\right)_+
 = -\lambda^*_+(x),
\quad \text{for } x\leq 0, 
\qquad\text{and}\qquad
\lim_{t\downarrow 0}t\log\EE\left(\E^{X_t}-\E^{x}\right)_+
 = -\lambda^*_+(x),
\quad \text{for } x\geq 0,
$$
where $\lambda_\pm^*$ are defined in~\eqref{eq:RateFunctions}, so that, for any real number~$x$, $\Sigma(x,t)$ converges to~$\sqrt{\eta}$
as~$t$ tends to zero.
Likewise, in the case $d\in(-1/2,0)$, from Theorem~\ref{thm:LDP}(i)
we obtain
$$
\lim_{t\downarrow 0}t^{1+d}\log \EE\left(\E^{x}-\E^{X_t}\right)_+
 = -\lambda^*_-(x),
\quad \text{for } x\leq 0,
\qquad\text{and}\qquad
\lim_{t\downarrow 0}t^{1+d}\log\EE\left(\E^{X_t}-\E^{x}\right)_+
 = -\lambda^*_-(x),
\quad \text{for } x\geq 0.
$$
Consider the ansatz $\Sigma_t(x)=\sigma_0 t^{d/2}$, for some $\sigma_0>0$;
the Black-Scholes call price then reads
$$
C^{\BS}(x, t, \Sigma_t(x))
 = \Nn\left(\frac{\sigma_0 t^{1/2+d/2}}{2}-\frac{x}{\sigma_0 t^{1/2+d/2}}\right)
  - \E^x \Nn\left(-\frac{\sigma_0 t^{1/2+d/2}}{2}-\frac{x}{\sigma_0 t^{1/2+d/2}}\right).
$$

Since
$\Nn(z) = \E^{-z^2/2}\left(\frac{1}{z} - \frac{1}{2z^3} + o(z^{-3}) \right)$
as $z$ tends to minus infinity, we obtain, after simplifications,
$$
C^{\BS}(x, t, \Sigma_t(x))
 = \exp\left(-\frac{1}{2}\left[\frac{x^2}{\sigma_0 ^2 t^{1+d}}-x+\frac{1}{4}\sigma_0^2 t^{1+d}\right]\right)
 \left(\frac{\sigma_0 ^2 t^{3/2+3d/2}}{2x} + o\left(t^{3/2+3d/2}\right) \right).
$$
Therefore, taking $\sigma_0=\frac{x}{\sqrt{2\lambda_-^*(x)}}=\sqrt{\frac{v_0}{\Gamma(2+d)}}$, 
we obtain 
$\lim_{t\downarrow 0}t\log\EE (\E^{X^{\BS}_t}-\E^{x})_+ = -\lambda^*_+(x)$,
for all $x>0$.
Similarly,
$\lim_{t\downarrow 0}t\log\EE (\E^{x}-\E^{X^{\BS}_t})_+ = -\lambda^*_+(x)$, 
for all $x<0$.

\subsection{Proof of Theorem~\ref{thm:ImpliedVol}(ii)}\label{sec:ProofLargeVol}
Consider a Call option in the Black-Scholes model, 
with log strike~$x t^{1+d/2}$ 
and time-dependent implied volatility
$\widetilde{\sigma}(x,t) \equiv \sqrt{2}t^{d/4}\left( \sqrt{\Lambda_+^*(x)}+\sqrt{\Lambda_+^*(x)-x} \right)$. 
Then
 \begin{align*}
C^{\BS}\left(x, t, \widetilde{\sigma}(x,t)\right)
 &= \EE\left(\E^{X_t^{BS}}-\E^{xt^{1+d/2}}\right)_+
  = \Nn\left(-\frac{x t^{1+d/2}}{\widetilde{\sigma}(x,t)\sqrt{t}} + \frac{\widetilde{\sigma}(x,t)\sqrt{t}}{2}\right)
   - \E^{x t^{1+d/2}}\Nn\left(-\frac{x t^{1+d/2}}{\widetilde{\sigma}(x,t)\sqrt{t}} - \frac{\widetilde{\sigma}(x,t)\sqrt{t}}{2}\right) 
\\
  & = \Nn\left(t^{1/2+d/4}\sqrt{2(\Lambda_+^*(x)-x)}\right)
  - \E^{xt^{1+d/2}}\Nn\left(-t^{1/2+d/4}\sqrt{2\Lambda^*_+(x)}\right),
\end{align*}
where we used the identity
$
\sqrt{\Lambda_+^*(x)}+\sqrt{\Lambda_+^*(x)-x}
 + \frac{x}{\sqrt{\Lambda_+^*(x)} + \sqrt{\Lambda_+^*(x)-x}}
 = 2\sqrt{\Lambda^*_+(x)}$
in the second line.
Since $\Nn(y)=1-\E^{-y^2/2}\left(y^{-1}+o(y^{-1})\right)$ for large~$y$, we obtain
$$
1-C^{\BS}\left(x, \widetilde{\sigma}(x,t)\right)
 = \frac{\exp\left((x-\Lambda^*_+(x))t^{1+d/2}\right)}{\sqrt{2}t^{1/2+d/4}}\left(\frac{1}{\sqrt{\Lambda^*_+(x)}}+\frac{1}{\sqrt{\Lambda^*_+(x)-x}}+o(1)\right),
$$
and therefore 
$\lim_{t\uparrow\infty} t^{-(1+d/2)}
\log\left(1-C^{\BS}\left(x,t, \widetilde{\sigma}(x,t)\right)\right)=x-\Lambda^*_+(x)$.

Let us now return to the proof of the theorem, following the lines of~\cite[Theorem 13]{JKRM}.
Let~$f$ be a function diverging to infinity at infinity such that that the pointwise limit 
$\Lambda(u):=\lim_{t\uparrow\infty}f(t)^{-1}\log \EE\left(\E^{uX_t}\right)$
exists for all $u\in \Dd_\Lambda\subset\RR$.
Since the stock price is a true positive martingale, we can define a new probability measure $\widetilde{\PP}$
via $\left.\D\widetilde{\PP}/\D\PP\right|_{\Ff_t} = S_t$.
Under~$\widetilde{\PP}$, the limiting cumulant generating function of~$(X_t/t)_{t\geq 0}$ reads
$\widetilde{\Lambda}(u) :=  \lim_{t\uparrow\infty}f(t)^{-1}\log \widetilde{\EE}\left(\E^{uX_t}\right)$,
and clearly
$\widetilde{\Lambda}(u) = \Lambda(u+1)$ for all 
$u\in \Dd_{\widetilde{\Lambda}}=\{v\in\RR: 1+v\in\Dd_\Lambda\}$.
Note that $0\in\Dd_{\widetilde{\Lambda}}^\circ$ if and only if  $1\in\Dd_{\Lambda}^\circ$.
This identity also shows that the Fenchel-Legendre transforms are related via
$\widetilde{\Lambda}^*(x)=\Lambda^*(x)-x$, for all $x\in\RR$.
Now, if the family $(X_t/f(t))_{t\geq1}$ satisfies a large deviations principle 
under both~$\PP$ and~$\PPt$ with speed $f(t)$ 
and good rate functions~$\Lambda^*$ and~$\widetilde{\Lambda}^*$, then the following behaviours hold:
\begin{eqnarray}\label{eq:OptionPricesLarge}
\begin{array}{llrl}
\quad & \text{(Put option)}  & 
\displaystyle \lim_{t\uparrow\infty}f(t)^{-1}\log\EE\left(\E^{xf(t)}-\E^{X_t}\right)_+  & = 
\left\{
\begin{array}{ll}
x-\Lambda^*(x) & \text{if }x\leq x^*,\\
x & \text{if }x>x^*,
\end{array}
\right.
\\
\quad & \text{(Call option)}  & 
\displaystyle \lim_{t\uparrow\infty}f(t)^{-1}\log\EE\left(\E^{X_t}-\E^{xf(t)}\right)_+  & = 
\left\{
\begin{array}{ll}
-\wt\Lambda^*(x)\quad & \text{if }x\geq \wt x^*,\\
0\quad & \text{if }x<\wt x^*,
\end{array}
\right.
\\
\quad & \text{(covered Call option)}  & \displaystyle \lim_{t\uparrow\infty}f(t)^{-1}\log\left[1-\EE\left(\E^{X_t}-\E^{xf(t)}\right)_+\right] & =
\left\{
\begin{array}{ll}
0 & \text{if }x> \wt x^*,\\
x-\Lambda^*(x)  &\text{if }x\in\left[x^*,\wt x^*\right],\\
x\quad & \text{if }x< x^*,
\end{array}
\right.
\end{array}
\end{eqnarray}
where $x^*$ and $\wt x^*$ are where the functions $\Lambda^*$ and $\widetilde{\Lambda}^*$
attain their minimum, and satisfy $x^*=\Lambda'_+(0) \leq \Lambda'_-(1) = \wt x^*$.
Furthermore the convergence in (i)-(iii) is uniform in $x$ on compact subsets of $\RR$.

We first prove the theorem in the case $d\in (0,1/2]$. 
For $f(t)\equiv t^{1+d/2}$, Equations~\eqref{eq:OptionPricesLarge} imply
\begin{equation*}
\begin{array}{rll}
\lim_{t\uparrow\infty}t^{-(1+d/2)}\log\EE\left(\E^{xt^{1+d/2}}-\E^{X_t}\right)_+
  & = x,
 &\text{for all } x\in \RR, \\
\lim_{t\uparrow\infty}t^{-(1+d/2)}\log\EE\left(\E^{X_t}-\E^{xt^{1+d/2}}\right)_+
 & =0,
 & \text{for all }x\in \RR, \\
\lim_{t\uparrow\infty}t^{-(1+d/2)}\log\left(1-\EE\left(\E^{X_t}-\E^{xt^{1+d/2}}\right)_+\right)
 & = x-\Lambda^*(x),
 & \text{for all }x\in \RR.
\end{array}
\end{equation*}
Therefore, the implied volatility satisfies 
$\lim_{t\uparrow\infty}t^{-d/4}\Sigma(x t^{1+d/2})
 = \sqrt{2}\left( \sqrt{\Lambda_+^*(x)}+\sqrt{\Lambda_+^*(x)-x} \right)$ 
 for all $x$ in $\RR$.
The proof of the theorem in the case $d\in(-1/2,0)$ is analogous to that in~\cite{JKRM},
and is therefore omitted.

\appendix
\section{The G\"artner Ellis Theorem}\label{app:LDP}
We provide here a brief review of large deviations and the G\"artner-Ellis theorem.
For a detailed account of these, the interested reader should consult~\cite{DZ}.
Let $(X_n)_{n\in\NN}$ be a sequence of random variables in~$\RR$, with law~$\mu_n$
and cumulant generating function $\Lambda_n(u) \equiv \log\EE(\E^{u X_n})$.
For a Borel subset $A$ of the real line, we shall denote respectively by 
$A^o$ and $\bar{A}$ its interior and closure (in $\RR$).

\begin{definition}\label{def:LDP}
The sequence {$X_n$} is said to satisfy a large deviations principle with speed $n$ 
and rate function~$I$ if for each Borel mesurable set~$E$ in~$\RR$, 
$$
-\inf_{x\in E^o} I(x)
 \leq \liminf_{n\uparrow \infty} \frac{1}{n}\log\PP\left(X_n \in E \right)
\leq \limsup_{n\uparrow \infty} \frac{1}{n}\log\PP\left(X_n \in E \right)
\leq -\inf_{x\in \bar{E}}I(x).
$$
Furthermore, the rate function is said to be good if it is strictly convex on the whole real line.
\end{definition}
Before stating the main theorem, we need one more concept:
\begin{definition}
Let $\Lambda : \RR \rightarrow (-\infty, +\infty]$ be a convex function, 
and $\Dd_\Lambda := \{u\in\RR: \Lambda(u) < \infty\}$ its effective domain. 
The function~$\Lambda$ is said to be essentially smooth if
\begin{itemize}
\item The interior $\Dd^o_\Lambda$ is non-empty;
\item $\Lambda$ is differentiable throughout $\Dd^o_\Lambda$;
\item $\Lambda$ is steep: $\lim \limits_{n\uparrow \infty}|\Lambda' (u_n)|=\infty$ 
for any sequence $(u_n)_{n\geq 1}$ in $\Dd_\Lambda^o$ converging to a boundary point 
of~$\Dd_\Lambda^o$.
\end{itemize}
\end{definition}
Assume now that the limiting cumulant generating function
$\Lambda(u) := \lim_{n\uparrow \infty }n^{-1}\Lambda_n(nu)$
exists as an extended real number for all $u\in\RR$, and let $\Dd_\Lambda$ denote its effective domain.
Let $\Lambda^*:\RR\to\RR_+$ denote its (dual) Fenchel-Legendre transform, via the variational formula
$\Lambda^*(x) := \sup_{\lambda \in D_\Lambda}\{\lambda x-\Lambda(\lambda)\}$.
Then the following holds:
\begin{theorem}[G\"artner-Ellis theorem]\label{thm:GE}
If $0\in \Dd_{\Lambda}^o$, $\Lambda$ is lower semicontinuous and essentially smooth,
then the sequence $(X_n)_n$ satisfies a large deviations principle with rate function~$\Lambda^*$.
\end{theorem}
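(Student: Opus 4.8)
The plan is to follow the classical two-sided argument (see \cite{DZ}), carefully separating the large deviations upper bound, which needs only the existence of the limit $\Lambda$ together with $0\in\Dd_\Lambda^o$, from the lower bound, into which essential smoothness of $\Lambda$ enters crucially. Throughout, let $\mu_n$ denote the law of $X_n$ and $\Lambda^*$ the candidate rate function. For the upper bound the engine is the exponential Chebyshev inequality: for $u\ge 0$ and $x\in\RR$, $\PP(X_n\ge x)\le\exp(-nux+\Lambda_n(nu))$, and symmetrically for $u\le 0$; taking $n^{-1}\log$, letting $n\uparrow\infty$ and optimising over $u$ gives $\limsup_n n^{-1}\log\PP(X_n\ge x)\le-\Lambda^*(x)$ for $x$ to the right of a minimiser of $\Lambda^*$, with the mirror bound to the left. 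I would then cover a compact set $F$ by finitely many such half-lines, use subadditivity of $\limsup$ to get $\limsup_n n^{-1}\log\PP(X_n\in F)\le-\inf_{x\in F}\Lambda^*(x)$ for compact $F$, and upgrade to closed sets via exponential tightness: since $0\in\Dd_\Lambda^o$ there is $\rho>0$ with $\Lambda(\pm\rho)<\infty$, so $\PP(|X_n|\ge M)\le\exp(-n\rho M)\big(\exp\Lambda_n(n\rho)+\exp\Lambda_n(-n\rho)\big)$, which is exponentially negligible as $M\uparrow\infty$; combined with the compact bound this closes the upper bound for all closed sets.

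For the lower bound I would fix an open set $G$ and $x\in G$ and aim at $\liminf_n n^{-1}\log\PP(X_n\in G)\ge-\Lambda^*(x)$. First I would reduce to the case where $x$ is an \emph{exposed point} of $\Lambda^*$ whose exposing hyperplane $\eta$ lies in $\Dd_\Lambda^o$, i.e. $\Lambda^*(y)>\Lambda^*(x)+\eta(y-x)$ for all $y\ne x$ while $\Lambda(\eta)<\infty$ with $\eta$ interior: by the convex-analytic consequences of essential smoothness (see \cite{DZ}), the set $\mathcal{E}$ of such points satisfies $\Lambda'(\Dd_\Lambda^o)\subseteq\mathcal{E}$ and is dense enough that $\inf_{G\cap\mathcal{E}}\Lambda^*=\inf_G\Lambda^*$ — in one dimension this reduction is elementary. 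For such an $x$ I would introduce the tilted laws $\wt\mu_n$ given by $\D\wt\mu_n/\D\mu_n(y)=\exp(n\eta y-\Lambda_n(n\eta))$, so that, restricting the mass to a ball $B_\delta(x)$,
\[
\PP(X_n\in G)\ \ge\ \exp\big(\Lambda_n(n\eta)-n\eta x-n|\eta|\delta\big)\,\wt\mu_n\big(B_\delta(x)\big),
\]
whence $n^{-1}\log\PP(X_n\in G)\ge -\eta x-|\eta|\delta+n^{-1}\Lambda_n(n\eta)+n^{-1}\log\wt\mu_n(B_\delta(x))$. Since $n^{-1}\Lambda_n(n\eta)\to\Lambda(\eta)$ and, because $\eta$ is the exposing hyperplane, $\Lambda^*(x)=\eta x-\Lambda(\eta)$, it would remain only to show $\wt\mu_n(B_\delta(x))\to1$ and then let $\delta\downarrow0$.

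This concentration I would obtain from a second application of the upper bound: the limiting cumulant generating function of $X_n$ under $\wt\mu_n$ is $v\mapsto\Lambda(\eta+v)-\Lambda(\eta)$, which is finite near $0$ since $\eta\in\Dd_\Lambda^o$, and its Fenchel dual is $y\mapsto\Lambda^*(y)-\eta y+\Lambda(\eta)$, which is non-negative, vanishes only at $x$, and — being convex in one dimension — is bounded below by a strictly positive constant on $\{|y-x|\ge\delta\}$; the upper bound therefore forces $\wt\mu_n(\{|y-x|\ge\delta\})\to0$ exponentially fast, so $\wt\mu_n(B_\delta(x))\to1$. Combining this with the displayed inequality and letting $\delta\downarrow0$ yields the lower bound at $x$, hence, through the reduction above, at every point of $G$.

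Putting the two bounds together gives the large deviations principle with rate function $\Lambda^*$, whose lower semicontinuity is automatic since it is a supremum of affine functions. I expect the one genuinely delicate point — the main obstacle — to be the reduction of the lower bound from exposed points to arbitrary points of $G$: it is exactly here that essential smoothness of $\Lambda$ is indispensable (together with $0\in\Dd_\Lambda^o$, which keeps the Legendre duality well posed and supplies exponential tightness); everything else is the exponential Chebyshev inequality, exponential tightness, and the change of measure above.
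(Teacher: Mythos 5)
This statement is not proved in the paper: Appendix~A recalls the G\"artner--Ellis theorem as a classical result and refers the reader to Dembo and Zeitouni \cite{DZ}, so there is no proof of the paper to compare against. Your proposal is a correct rendition of the standard \cite{DZ} argument: the exponential Chebyshev/Markov bound plus exponential tightness (the latter coming from $0\in\Dd_\Lambda^o$) give the upper bound on closed sets; for the lower bound you tilt by an exposing hyperplane $\eta\in\Dd_\Lambda^o$ at an exposed point $x$ of $\Lambda^*$, control the tilted mass of $B_\delta(x)$ by a second use of the upper bound applied to the tilted family (whose limiting cgf $v\mapsto\Lambda(\eta+v)-\Lambda(\eta)$ is finite near $0$ and whose dual vanishes only at $x$ by exposedness), and then invoke essential smoothness to pass from the set of exposed points with interior exposing hyperplanes to arbitrary points of an open set. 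You have also correctly isolated where each hypothesis is used. One cosmetic remark not affecting the proof: the paper's Definition~A.1 calls a rate function ``good'' when it is strictly convex, whereas the standard meaning in \cite{DZ} is that all sublevel sets are compact; your argument (via exponential tightness) implicitly produces goodness in the standard sense, which is the relevant one.
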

When partial conditions of the G\"artner-Ellis theorem are satisfied, a full large deviations principle
might not be available, but one can define a partial one, as follows:
\begin{definition}\label{def:partialLDP}
We shall say that the sequence~$(X_n)$ satisfies a partial large deviations principle with speed~$n^{-1}$
on some interval~$I'$ with rate function~$\Lambda^*$ if Definition~\ref{def:LDP} holds
for all subsets~$A\subset I'$.
\end{definition}
It is clear that if $0\in \Dd_{\Lambda}^o$ and $\Lambda$ is lower semicontinuous and strictly convex 
on some interval~$I\subset\RR$, 
then~$(X_n)$ satisfies a partial large deviations principle 
on~$I' = \Lambda'(I)$ with rate function $\Lambda^*(x):=\sup_{u\in I}\{ux-\Lambda(u)\}$.


\end{document}